\documentclass[11pt]{article}
\usepackage{amsmath,amssymb,amsthm}
\usepackage[nocompress]{cite}
\usepackage{geometry}
\usepackage{enumitem}
\usepackage{tikz}
\usepackage{hyperref}
\usepackage{fullpage}

\usepackage{graphicx} 
\usepackage{xcolor}
\usepackage{amsmath,amssymb,amsthm}
\theoremstyle{plain} 
\newtheorem{theorem}{Theorem}
\newtheorem{lemma}[theorem]{Lemma} 

\theoremstyle{definition} 
\newtheorem{example}{Example} 

\newcommand{\lms}{\ensuremath\mathsf{leftmatchstates}}
\newcommand{\rms}{\ensuremath\mathsf{rightmatchstates}}
\newcommand{\prefixmatch}{\ensuremath\mathsf{prefixmatch}}
\newcommand{\suffixmatch}{\ensuremath\mathsf{suffixmatch}}
\newcommand{\prefixset}{\ensuremath\mathsf{prefixset}}
\newcommand{\suffixset}{\ensuremath\mathsf{suffixset}}
\newcommand{\inttraversal}{\ensuremath\mathsf{IntTraversal}}
\newcommand{\rootnode}{\ensuremath\mathit{root}}
\newcommand{\occ}{\ensuremath\mathit{occ}}

\title{Improved Regular Expression Matching with Simple Backreferences}
\author{Philip Bille \\\texttt{phbi@dtu.dk} \and Inge Li G{\o}rtz \\\texttt{inge@dtu.dk} \and Rikke Schjeldrup Jessen \\\texttt{rscje@dtu.dk}}

\date{}

\begin{document}

\maketitle

\begin{abstract}
A regular expression with backreferences (rewb) specifies a set of strings formed by characters combined with concatenation, union, star operators, and backreferences. A backreference consists of a capturing group $(\cdot)_i$ and a reference $\backslash i$. The substring matched by the reference must match the substring matched by the corresponding capturing group. Given a rewb $R$ and a string $Q$, the rewb matching problem is to decide whether $Q$ is one of the strings specified by $R$. Rewb matching is a basic tool in computer science for searching and processing text, is supported in most modern programming languages, and is used across a wide range of applications. 

In full generality, rewb matching is NP-complete, but efficient solutions exist for various subclasses. In the paper, we focus on rewb containing a single capturing group and $k$ references. For this class, Uezato~[CPM 2026] gave an $O((k n^2 m^2)$ time and $O(n^2m^2)$ space algorithm, where $m$ is the length of the regular expression $R$ and $n$ is the length of the string $Q$. For the special case of $k=1$, Nogami and Terauchi~[MFCS 2025] gave an $O(n^2m^2)$ time and $O(n+ m^2)$ space algorithm. On the other hand, Nogami, Nakamura, and Terauchi~[arXiv 2026] gave a conditional lower bound, showing that we cannot solve the problem in $O(n^{2-\epsilon} \mathrm{poly}(m))$ for any $\epsilon > 0$ assuming the orthogonal vector hypothesis. Our main result is a new algorithm that runs in $O(n^2m)$ time and uses $O(nm)$ space. This improves the above results (by a factor of $km$ and $m$, respectively) and the former's space bound (by a factor of $nm$). We also show how to extend our algorithm to handle a slightly more general class of ordered and single-nested rewbs.

To achieve our results, we combine suffix-tree enumeration of candidate strings for the capturing groups with a new interval-tree data structure. The interval tree stores state-sets of Thompson's finite automaton for dyadic substrings of the input and lets us efficiently propagate feasible match endpoints across multiple references and capturing groups.

\end{abstract}

\section{Introduction}
A regular expression with backreferences (rewb) specifies a set of strings formed by characters combined with concatenation ($\odot$), union ($|$), star ($^*$) operator, and backreferences. A backreference consists of a numbered \emph{capturing group} $(e)_i$ and a corresponding numbered reference $\backslash i$. The substring matched by the reference has to be identical to the substring matched by the corresponding capturing group. For example, the rewb $c(a^*)_1(b|c) a\backslash1$ (by convention, we often omit the $\odot$ symbol when writing expressions) matches the string $caabaaa$, with the capturing group and reference matching the substring $aa$. Given a rewb $R$ and a string $Q$, the \emph{regular expression with backreferences matching problem} is to decide whether $Q$ is one of the strings specified by $R$.  

Rewb matching is a basic tool in computer science for searching and processing text and is supported in most modern programming languages. Rewb matching appears across a wide range of applications such as network intrusion detection~\cite{NamjoshiNarlikar2010},
graph-database querying~\cite{Schmid2020}, software testing and symbolic execution~\cite{LoringMitchellKinder2019}, and string-constraint solving~\cite{ChenEtAl2022}.

In full generality, rewb matching is NP-complete~\cite{AHO1990255} and upper and lower bounds have been studied for various subclasses of rewb~\cite{uezato2026, NT2025, LARSEN1998, FREYDENBERGER20191, HHLVG2026, KU2026, nogami2026hardness}. In this paper, we focus on rewb matching for rewbs that contain a single capturing group and one or more references. This class of expressions captures many practical uses of rewb~\cite{NT2025}. To state the previous results, let $R$ be a rewb of length $m$ with a single capturing group and $k\geq 1$ references ($R$ is of the form $e_0(e)_1e_1\backslash1~e_2\backslash1~e_3\ldots e_{k}\backslash1~e_{k+1}$) and let $Q$ be a string of length $n$. If $k=1$, (i.e., $R = e_0(e)_1e_1\backslash1 e_2$), Nogami and Terauchi~\cite{NT2025} gave an $O(n^2m^2)$ time and $O(n+ m^2)$ space algorithm. For general $k$, Uezato~\cite{uezato2026} gave an $O((k n^2 m^2)$ time and $O(n^2m^2)$ space algorithm. On the other hand, Nogami, Nakamura, and Terauchi~\cite{nogami2026hardness} gave a conditional lower bound, showing that we cannot solve the problem in $O(n^{2-\epsilon}\mathrm{poly}(m))$ time for any $\epsilon>0$, assuming the orthogonal vectors hypothesis. Thus, the exponent on $n$ is conditionally optimal for a polynomial dependency on $m$. A recent result by Kumabe and Uezato~\cite{KU2026} shows that it is possible to obtain a subquadratic time dependency on $n$ at the cost of a multiplicative exponential dependency on $m$.

\subparagraph{Results}
Our main result is the following. 

\begin{theorem}\label{thm:multiple_captgroups}
    Given a regular expression of length $m$ with a single capturing group and $k$ references and a string of length $n$, we can solve the regular expression matching with backreferences problem in $O(n^2m)$ time and $O(nm)$ space. 
\end{theorem}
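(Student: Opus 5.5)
The plan follows the outline in the abstract: use a suffix tree to enumerate the distinct candidate strings for the capturing group, and use an interval tree of Thompson-NFA state sets to check the references for each candidate.

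\textbf{Setup.} Write $R = e_0(e)_1e_1\backslash1\,e_2\backslash1\cdots e_k\backslash1\,e_{k+1}$. Build Thompson's NFA $A$ for the regular expression obtained from $R$ by treating $(e)_1$ as a marked subexpression and each $\backslash1$ as a placeholder; $A$ has $O(m)$ states. A match of $Q$ is determined by three things:
\begin{itemize}
\item a substring $P=Q[i..j]$ matched by $e$;
\item a sequence of positions at which the $k$ references each read a copy of $P$;
\item the requirement that the segments in between are matched by $e_1,\dots,e_{k+1}$, and the prefix by $e_0$.
\end{itemize}

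\textbf{Enumerating candidates.} Build the suffix tree of $Q$ in $O(n)$ time and space. Each distinct substring $P$ corresponds to a locus in the tree. I will process the capture start $i$ and the length $\ell=|P|$ together. For a fixed start $i$, simulate the state set of $e_0$ on $Q[1..i-1]$ (precomputed once for all $i$, in $O(nm)$ time). Then extend $P$ one character at a time while simulating $e$, which costs $O(m)$ per step. This gives $O(n^2m)$ total work for deciding which $(i,\ell)$ pairs are valid captures.

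\textbf{Handling the references.} For a fixed length $\ell$, the occurrences of $Q[i..i+\ell-1]$ are exactly the leaves under the corresponding suffix-tree locus. Let $\occ_\ell(p)$ be the predicate ``$Q[p..p+\ell-1]=P$''. Rather than working per candidate, I will process all candidates of the same length $\ell$ in one sweep. Group the positions $p$ into equivalence classes by the string $Q[p..p+\ell-1]$; these classes come from the suffix-tree nodes at string depth $\ell$. Within one class, run a left-to-right sweep over $Q$ that maintains the set of NFA states reachable under the following rule: a reference may be taken from position $p$ to $p+\ell$ whenever $p$ is in the class. The sweep simulates $A$ with an extra $\varepsilon$-like jump for $\backslash1$. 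Doing this naively costs $O(nm)$ per class, which is too much when there are many classes.

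\textbf{The interval tree (main obstacle).} This is where the interval tree of dyadic substrings is needed. For every dyadic interval $Q[a2^t+1..(a+1)2^t]$, store the transition relation of the NFA segments. The stored relations must be restricted to boolean \emph{state-set} propagation from a single source set, not full $m\times m$ relations; storing full relations would cost $m^2$. The key claim to establish is that, within a fixed class and length, the reachable state sets only need to be propagated across the gaps between consecutive occurrences in the class. Each gap is decomposed into $O(\log n)$ dyadic pieces, but the pieces can be charged so that the total work over all classes of length $\ell$ is $O(nm)$. Summing over $\ell$ then gives $O(n^2m)$ time, and reusing the $O(nm)$-space structure for each $\ell$ keeps the space at $O(nm)$.

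I expect this amortization to be the hardest step, meaning getting propagation through gaps in $O(m)$ per occurrence rather than $O(m\log n)$ or $O(m^2)$. My first attempt will be to propagate the dyadic pieces lazily, so that each state is pushed at most once per tree level per sweep. The fallback is to exploit the structure of Thompson automata, where $\varepsilon$-closure is linear, so that composing a state set through a stored piece costs $O(m)$.

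Finally, $Q$ is accepted iff some class reaches the accepting state at position $n$ after its $k$-th reference has been taken. To enforce that exactly $k$ references are used in order, the counter is encoded by the position within $R$: each reference is a distinct NFA placeholder, so no extra factor of $k$ appears.
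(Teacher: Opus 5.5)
Your proposal has a genuine gap, at the step you flag yourself. The interval-tree step carries the whole $O(n^2m)$ bound, and you leave it as a ``key claim to establish''; neither mechanism you suggest can establish it. In a left-to-right sweep the carried set is not single-source. A segment matched by $e_h$ may span several occurrences of the class, so the active states of its sub-automaton come from different start positions and form an arbitrary subset. Pushing that subset across a dyadic piece $Q[a..b]$ means applying the piece's full reachability relation to it. You can get this only by simulating $b-a+1$ characters (your naive $O(nm)$ per class) or by storing the $m\times m$ relation, which you have already ruled out. Linearity of $\varepsilon$-closure in Thompson automata makes only a \emph{single-character} step cost $O(m)$; it gives nothing for a multi-character piece, and the ``lazy push once per level'' has no stored data to push through. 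Even granting $O(m)$ per piece, the charging fails. A gap splits into up to $2\log n$ dyadic pieces, and a fixed length can have $\Theta(n)$ long gaps (e.g.\ $Q=XX$), so the natural bound is $O(nm\log n)$ per length, not $O(nm)$. Removing that $\log n$, and fitting $\Theta(n\log n)$ stored $m$-bit sets into $O(nm)$ words, needs word-packed bit vectors with $w\ge\log n$, which you never use.

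The missing idea is a meet-in-the-middle test that never propagates an arbitrary set.
\begin{itemize}
\item Process the references in $k$ sequential rounds, with a separate interval tree $T_h$ for each $e_h$.
\item Node $v$ of $T_h$ stores, for each $i\le c(v)$, the states reachable from the start state of $e_h$'s TNFA on $Q[i+1..c(v)]$. For each $j>c(v)$ it stores the states from which the accept state is reachable on $Q[c(v)+1..j-1]$.
\item Because source and sink are fixed, $O(m)$ bits per (node, position) suffice. Moreover, $Q[i+1..j-1]\in L(e_h)$ iff the two sets meet at the unique node with $i\le c(v)<j$.
\item For a candidate $q$, let $P_h$ be the feasible prefix endpoints produced by round $h-1$ (initially the ends of occurrences of $q$ preceded by an $e_0$-match).
\item Walk $T_h$ top-down, splitting $P_h$ and $\occ_q$ at $c(v)$. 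OR together the left sets of $P_h\cap[l(v),c(v)]$, and AND the result with the right set of each $j\in\occ_q\cap(c(v),r(v)]$. Each success puts $j+|q|-1$ into $P_{h+1}$.
\end{itemize}
The work at $v$ is $O((|P_h\cap[l(v),r(v)]|+|\occ_q\cap[l(v),r(v)]|)\,m_h/w)$. The amortization is per tree node over all distinct $q$, not per gap: each position starts at most $n$ distinct substrings, so $\sum_q|\occ_q\cap[l(v),r(v)]|\le n(r(v)-l(v)+1)$. This gives $O(n^2m_h/w)$ per level and $O(n^2m_h)$ per tree, summing to $O(n^2m)$ over the $k$ trees with no factor $k$. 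Your candidate enumeration (distinct strings via suffix-tree loci, with the $e_0$ condition selecting valid capture positions) is fine.
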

Theorem~\ref{thm:multiple_captgroups} improves the previous results for the polynomial dependency on $m$ regime. Compared to Uezato~\cite{uezato2026}, we improve the time by a factor $km$ and the space by a factor $nm$. Compared to the Nogami and Terauchi~\cite{NT2025} result for $k=1$, we improve the time bound by a factor $m$.

We also show how to generalize our results to a slightly larger subclass of rewbs. We say that a rewb $R$ is \emph{ordered} if all references to the $i$th capturing group appear before the $i+1$st capturing group and after the $i-1$st capturing group. Furthermore, we define the \emph{nested level} of $R$~\cite{LARSEN1998} recursively as follows: a regular expression (with no backreferences) has level $0$. The nested level of $(e)_i$ and the reference $\backslash i$ is one more than the nested level of $e$. The nested level of $R$ is the maximal nested level of its subexpressions. For example, the expression $(a^*b)_1c(\backslash1|ab)_2\backslash 2$ is nested with level 2, since $(a^*b)_1$ and $\backslash1$ is nested with level 1. If $R$ is nested with level $1$, we say that $R$ is \emph{single-nested}. Note that the above setting for Theorem~\ref{thm:multiple_captgroups} is a special case of ordered, single-nested rewb. We show how to extend Theorem~\ref{thm:multiple_captgroups} to this more general setting.

\begin{theorem}\label{thm:orderedsinglenested}
    Given an ordered, single-nested regular expression of length $m$ and a string of length $n$, we can solve the regular expression matching with backreferences problem in $O(n^2m)$ time and $O(nm)$ space. 
\end{theorem}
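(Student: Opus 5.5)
We reduce Theorem~\ref{thm:orderedsinglenested} to the machinery behind Theorem~\ref{thm:multiple_captgroups}, processing capturing groups left to right. An ordered, single-nested rewb decomposes as $R = f_0\, G_1\, f_1\, G_2 \cdots G_g\, f_g$. Here each block $G_i$ is a maximal segment that contains the capturing group $(e_i)_i$, where $e_i$ is a plain regular expression. The block $G_i$ also contains the references $\backslash i$, interleaved with plain regular expressions and possibly sitting under unions and stars. Single-nestedness guarantees that no reference or group occurs inside any $e_i$. Orderedness guarantees that all occurrences of $\backslash i$ lie between $(e_i)_i$ and $(e_{i+1})_{i+1}$. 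Let $m_i$ be the length of block $i$, so that $\sum_i m_i \le m$.

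The algorithm maintains a boolean vector $S\subseteq\{0,\dots,n\}$ of feasible positions: $p\in S$ iff $Q[1..p]$ matches the prefix of $R$ processed so far. Initially $S$ is the set of positions reachable by simulating Thompson's automaton for $f_0$ from position $0$, computed in $O(nm)$ time. To process block $G_i$ together with the following $f_i$, we use the single-group procedure underlying Theorem~\ref{thm:multiple_captgroups}, with two modifications.
\begin{enumerate}[label=(\roman*)]
\item It starts from the whole set $S$ of start positions rather than from position $0$.
\item It outputs the set $S'$ of end positions rather than a yes/no answer.
\end{enumerate}
Concretely, it enumerates candidate captured strings via the suffix tree of $Q$. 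For each node and each start $p\in S$, the capture $Q[p+1..q]$ must match $e_i$. Every later occurrence of $\backslash i$ is treated as a single special transition that consumes an occurrence of the current candidate string. The interval tree storing state-sets for dyadic substrings lets us propagate feasible endpoints through the automaton of $G_i f_i$ in which these special transitions appear. We then set $S\leftarrow S'$. At the end we accept iff $n\in S$.

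For the complexity bound, the key claim is that processing block $i$ costs $O(n^2 m_i')$ time and $O(n m_i')$ space, where $m_i'$ is the size of the automaton part touched in that phase. Since $\sum_i m_i' = O(m)$, summing over blocks gives $O(n^2 m)$ time. Space is reused between phases, and the vector $S$ adds only $O(n)$, so the total space is $O(nm)$. Correctness follows by induction over blocks. Orderedness means the value of capture $i$ is irrelevant after block $i$, so forgetting everything except end positions is lossless. Single-nestedness ensures captures contain no references, so each $e_i$ is an ordinary regular expression.

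The main obstacle is step (i): the single-group algorithm must handle an arbitrary set of start positions, rather than the fixed start at $0$, without paying an extra factor $n$. Two further complications arise. References may appear under stars or unions within the block, for example $(\backslash 1 \mid ab)^*$, rather than in the linear form $e_1\backslash1 e_2\cdots$. Also, a star may contain the capturing group itself, so that it is re-captured on each iteration. To address the start-position issue, I would first fold $S$ into the automaton as a set of initial states injected at positions $p\in S$. This is the same way Thompson simulation absorbs epsilon-closures, and it should keep each phase within $O(n^2 m_i)$. To address references under stars or unions, I would treat each reference occurrence as a labelled transition in the block's automaton, rather than relying on the linear structure. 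A group under a star would have to be handled by unrolling the star at block granularity, iterating the phase while $S$ changes. That would cost an extra factor, which I would argue can be avoided by noting that in each iteration only new start positions matter, giving amortized $O(n)$ injections overall.
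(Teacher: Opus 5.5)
For the form the paper actually treats, $R=E_0E_1\cdots E_t e_{t+1,0}$ with $E_\ell=e_{\ell,0}(d_\ell)_\ell e_{\ell,1}\backslash\ell\cdots e_{\ell,k_\ell}\backslash\ell$, your argument is the paper's. You process the blocks left to right with the sequential interval-tree traversal of Theorem~\ref{thm:multiple_captgroups}, carry forward only the feasible end positions, and prove correctness by induction over blocks, as in Lemma~\ref{lem:endpoints}.

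Your obstacle (i) is not really an obstacle here. The single-group procedure already starts from an arbitrary set, namely $\prefixset$, so you only substitute $S$. The plain gaps $f_i$ are handled by one Thompson simulation that injects $\theta$ at every $p\in S$, in $O(nm)$ time.

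This handling of the start set is actually better than the paper's Case~3. As written, Case~3 feeds the $q$-independent set $\mathsf{endpoints}_{\ell-1}$ into the root of $T_{\ell,0}$ once per candidate $q$. Those indices are not occurrence ends of $q$, so Lemma~\ref{lem:suffixmatch_size} does not bound them. The root alone can then cost $\Theta(n)$ per candidate, i.e.\ $\Theta(n^3)$ over $\Theta(n^2)$ candidates. In your decomposition, every prefix set entering an interval tree consists of end positions of occurrences of $q$: the first comes from $S$ shifted by $|q|$, the later ones from $\mathsf{nextprefixmatch}$. So the charging of Lemma~\ref{lem:suffixmatch_size} goes through unchanged. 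You should still add the separate check for an empty capture, which the suffix-tree enumeration does not produce.

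The gap is in the extra generality you claim, on which your key bound ``processing block $i$ costs $O(n^2m_i')$'' silently depends. An interval tree only decides whether the substring between a chosen left and right endpoint matches a fixed plain expression. The sequential traversal therefore needs a fixed chain $e_{i,1},\dots,e_{i,k}$ with one tree per link.

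\begin{itemize}
\item \emph{Reference under a star.} With, e.g., $(\backslash1\mid ab)^*$, the number of consumed copies of $q$ is unbounded, so no such chain exists. ``Treat the reference as a labelled transition'' would mean iterating traversals to a fixpoint with an unbounded number of rounds. Moreover, positions produced by the plain branch $ab$ are not occurrence ends of $q$, so the charging of Lemma~\ref{lem:suffixmatch_size} breaks.
\item \emph{Capturing group under a star.} Rerunning the phase until $S$ stabilizes costs, in every round, an enumeration of candidates and a traversal with their complete $\occ_q$ lists. Those lists cannot be shortened, because the later references may occur anywhere. Even one new start position can bring $n$ candidates with $\Theta(n^2)$ total occurrences, e.g.\ in $Q=a^n$. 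Counting injections therefore bounds the wrong quantity, and with up to $\Theta(n)$ rounds you only get $O(n^3m)$.
\end{itemize}

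The paper avoids both issues by assuming at the outset that the rewbs under consideration have the linear block form above. You should either make the same restriction, in which case your proof is complete, or give genuine arguments for these cases, which the paper does not provide either.
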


\subparagraph{Technical overview}
Our main tool is an \emph{interval tree}, a complete binary tree $T$ whose nodes represent dyadic intervals of $Q$. For a regular expression $e$ (without backreferences), each node $v$ in $T$ stores, for every position on either side of its midpoint, a bit vector representing the states in a finite automaton (Thompson's NFA construction~\cite{Thompson1968}) that are reachable by the corresponding left or right substring. Any pair of endpoints is separated at a unique lowest node of the tree, and the substring between them matches $e$ precisely when the corresponding left and right state-sets intersect. Consequently, a tree traversal can process many candidate endpoints simultaneously by taking unions of left state-sets and intersecting them with right state-sets. 

For an expression $R=e_0(e)_1e_1\backslash 1e_2$, we use a suffix tree together with Thompson's NFA for $e$ to enumerate every distinct substring $q$ of $Q$ that matches $e$, together with all occurrences of $q$. From these occurrences and the matches of $e_0$ and $e_2$, we obtain the possible boundaries surrounding~$e_1$. Then, traversing the interval tree for $e_1$ we determine whether some choice gives a match of $R$. We show that an amortized analysis over all $q$ and all tree nodes gives $O(n^2m)$ time and $O(nm)$ space. For several references to the same capturing group, we traverse one interval tree for each intervening regular expression and propagate the feasible prefix endpoints from one traversal to the next. Finally, an ordered, single-nested rewb decomposes into a sequence of such single-group blocks, which we process successively while carrying forward the feasible endpoints.

\subparagraph{Outline}
In Section~\ref{sec:preliminaries} we review notation for strings, automata and suffix trees. Section~\ref{sec:intervaltree} introduces the interval tree data structure. In Section~\ref{sec:simplealgorithm}, we present an algorithm to match the simpler class of rewbs with only one capturing group and one reference. In Section~\ref {sec:generalization1}, we generalize the result to multiple references, and in Appendix~\ref{sec:generalization2} we generalize further to ordered and single-nested rewbs.

\section{Preliminaries}\label{sec:preliminaries}
\subparagraph{Strings}
A string $Q$ is a sequence of characters $Q[0]Q[1]\ldots Q[n-1]$, such that $Q[i]$ is the $i$'th character of $Q$, and the substring $Q[i,j]$ is the concatenation of characters $Q[i]Q[i+1]\ldots Q[j]$. We also say that $Q[i,i-1]$ is the empty string, which we denote $\varepsilon$.

\subparagraph{Finite automata}
A \emph{finite automaton} is a tuple $A = (V, E, \Sigma, \Theta, \Phi)$, where $V$ is a set of nodes called \emph{states}, $E \subseteq (V \times V \times \Sigma \cup \{\epsilon\})$ is a set of directed edges between states called \emph{transitions} each labeled by a character from $\Sigma \cup \{\epsilon\}$, $\Theta \subseteq V$ is a set of \emph{start states}, and $\Phi \subseteq V$ is a set of \emph{accepting states}. In short, $A$ is an edge-labeled directed graph with designated subsets of start and accepting nodes. $A$ is a \emph{deterministic finite automaton} (DFA) if $A$ does not contain any $\epsilon$-transitions,  all outgoing transitions of any state have different labels, and there is exactly one start state. Otherwise, $A$ is a \emph{nondeterministic finite automaton} (NFA).  For an automaton $A$, we also define the \emph{reverse automaton} $\overleftarrow{A}$, which is the automaton with the same states as $A$, but with the transition $(u,v,a)$ for every transition $(v,u,a)$ in $A$.

 Given a string $Q$ and a path $p$ in $A$, we say that $p$ and $Q$ match if the concatenation of the labels on the transitions in $p$ is $Q$. Given a state $s$ in $A$ and a character $\alpha$ we define the \emph{state-set transition} $\delta_A(s, \alpha)$ to be the set of states reachable from $s$ through paths matching $\alpha$ (note that the paths may include transitions labeled $\epsilon$). For a set of states $S$ we define $\delta_A(S,\alpha) = \bigcup_{s\in S} \delta_A(s,\alpha)$. We say that $A$ \emph{accepts} a string $Q$ if there is a path from a state in $\Theta$ to a state in $\Phi$ that matches $Q$. Otherwise, $A$ \emph{rejects} $Q$. We can use a sequence of state-set transitions to test acceptance of a string $Q$ of length~$n$ by computing a sequence of state-sets $S_0, \ldots, S_n$, given by $S_0 = \delta_A(\Theta, \epsilon)$ and $S_i = \delta_A(S_{i-1}, Q[i])$, $i=1, \ldots, n$. We have that $\Phi \cap S_n \neq \emptyset$ iff $A$ accepts $Q$.

\begin{figure}[t]
    \centering
    \includegraphics[width=0.7\linewidth]{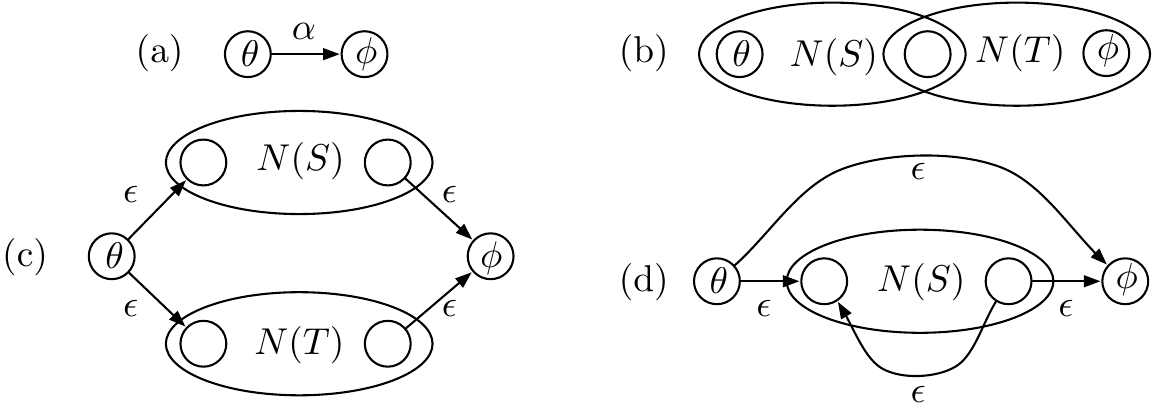}
    \caption{Thompson's recursive NFA construction. The regular
   expression $\alpha \in \Sigma \cup \{\epsilon\}$ corresponds to NFA
   $(a)$. If $S$ and $T$ are regular expressions then $N(ST)$,
   $N(S|T)$, and $N(S^*)$ correspond to NFAs $(b)$, $(c)$, and $(d)$,
   respectively.  In each of these figures, the leftmost node $\theta$
   and rightmost node $\phi$ are the start and the accept nodes,
   respectively. For the top recursive calls, these are the start and
   accept nodes of the overall automaton. In the recursions indicated,
   e.g., for $N(ST)$ in (b), we take the start node of the
   subautomaton $N(S)$ and identify with the state immediately to the
  left of $N(S)$ in (b). Similarly the accept node of $N(S)$ is
   identified with the state immediately to the right of $N(S)$ in
   (b).}
    \label{fig:Thompson}
\end{figure}

\subparagraph{Thompson automata}
Given a regular expression $R$, we can construct an NFA accepting precisely the strings in $L(R)$ by several classical  methods~\cite{MY1960, Glushkov1961, Thompson1968}. In particular, Thompson~\cite{Thompson1968} gave the simple textbook construction shown in Figure~\ref{fig:Thompson}. We will call an NFA constructed with these rules a \emph{Thompson NFA} (TNFA). A TNFA $N(R)$ for $R$ has at most $2m$ states, at most $4m$ transitions, and can be computed in $O(m)$ time. Furthermore, a TNFA has only one start state, $\theta$, and one accepting state, $\phi$. We use subscripts $\theta_A$ and $\phi_A$ to specify the start and accepting states of automaton $A$ when needed. We can compute a state-set transition on a single character in $O(m)$ time with a breadth-first search. Thus, we can solve regular expression matching on $Q$ in $O(nm)$ time and $O(m)$ space using the state-set transition algorithm above. This bound can be slightly improved by polylogarithmic factors, but by conditional lower bounds we should not hope to solve regular expression matching in  $O((nm)^{1-\epsilon})$ time for $\epsilon > 0$, see e.g., \cite{Myers1992, BFC2008, Bille2006, BT2009, BT2010, BI2016, BGL2017, Schepper2020}.

\subparagraph{Suffix trees}


The \emph{suffix tree} $ST$ of a string $Q$ over an alphabet $\Sigma$ of length $n$ is the compacted trie over the set of all suffixes of $Q$. Each leaf is represented by the starting index in $Q$ of the suffix it represents. For a node $v$, we define $str(v)$ to be the string consisting of the concatenated labels on the path from the root to $v$. The \emph{string depth} of $v$ is the length of $str(v)$. The locus of a string $q$ in $ST$ is the node $v$ in $ST$ with the smallest string depth such that $q$ is a prefix of $str(v)$. We can construct the suffix tree in $sort(n,|\Sigma|)$ time, where $sort(n,u)$ is the time to sort $n$ integers of an alphabet of size $u$~\cite{FCFM2000} and represent it in $O(n)$ space. Given the locus of a string $q$ in $ST$, we can output all starting indices of occurrences of $q$ in $Q$ by traversing the subtree below the locus of $P$ in $O(|\occ_q|)$ time, where $|\occ_q|$ is the number of occurrences of $q$ in $Q$.

\section{Interval Trees}\label{sec:intervaltree}

In this section, we introduce a data structure which we use to match the patterns throughout this paper.

Let $e$ be a regular expression of length $m$ with corresponding TNFA $A$ and let $Q$ be a string of length $n$ (we assume without loss of generality that $n$ is a power of 2).
The \emph{interval tree} $T$ of $Q$ for $e$ is a complete binary tree over $Q$ such that the nodes of $T$ correspond to the dyadic intervals of $Q$.
The leaves in $T$ correspond to intervals of length 1 in $Q$ and  an internal node $v$ of height $k$ corresponds to an interval of length $2^k$. See Figure~\ref{fig:interval_tree}(a) for an example. 

Let  $v$ be a node in $T$. For an integer $i \in [l(v), c(v)]$ and an integer $j \in [c(v) +1, r(v)]$ we define the sets of  \emph{left match states} and \emph{right match states}
\begin{alignat*}{2}
    \lms(v,i) &= \{s \in A \mid \text{there is a path from $\theta$ to $s$ matching $Q[i+1,c(v)]$}\} \\
    \rms(v,j) &= \{s \in A \mid \text{there is a path from $s$ to $\phi$ matching $Q[c(v)+1, j-1]$}\} 
\end{alignat*}
Note that if $i = c(v)$ then $\lms(v,i)$ is the set of states $s$  such that there is a path from $\theta$ to $s$ matching the empty string, and similarly for $\rms(v,c(v)+1)$.

Each node $v$ in the interval tree stores the following information:
\begin{itemize}
    \item The index of the left and right ends of the interval of $v$, denoted $l(v)$ and $r(v)$, respectively, and the
     middle of the interval rounded down, denoted $c(v)$, where $c(v)=\lfloor(l(v)+r(v))/2\rfloor$.
    \item The \emph{left match states} of $v$ for each  $i \in [l(v), c(v)]$, and the \emph{right match states} of $v$ for each $j \in[c(v)+1, r(v)]$. The sets are stored as bit vectors with an entry for each state in $A$, which has a 1 if the state is in the set, and a 0 otherwise.
\end{itemize}
\begin{figure}[t]
    \centering
    \includegraphics[width=1\linewidth]{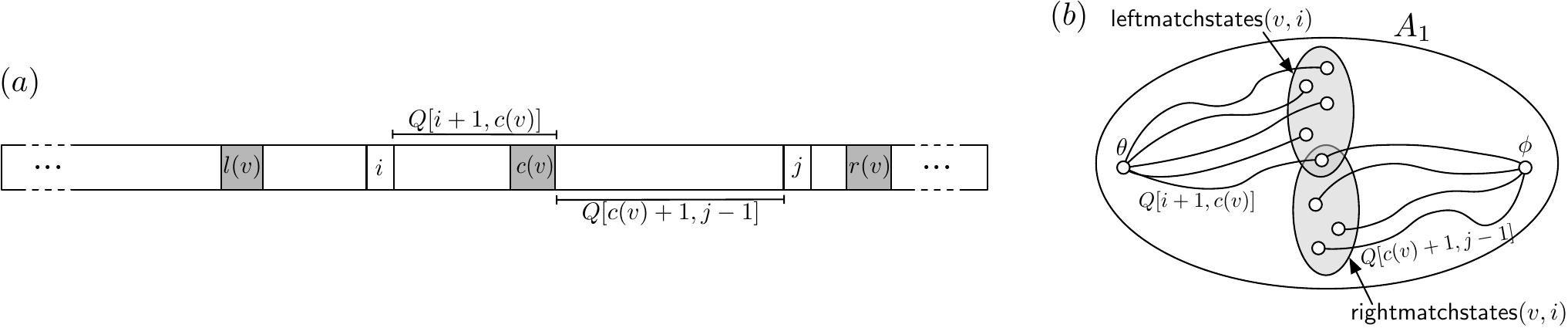}
    \caption{$(a)$ The substrings of a dyadic interval that are matched in the automaton $A_1$ to find $\lms(v,i)$ and $\rms(v,j)$ for some choice of $i$ and $j$. $(b)$ The left match states of $v$ from $i$ are the states in $A_1$ that can be reached from $\theta$ by a path that matches $Q[i,c(v)]$. The right match states of $v$ from $i$ are the states from which $\phi$ can be reached by a path matching $Q[c(v)+1,j]$.}
    \label{fig:lms_rms}
\end{figure}

 \begin{lemma}\label{lem:intervaltree}
    Given string $Q$ of length $n$ and regular expression $e$ of length $m$, the interval tree $T$ of $Q$ for $e$ can be computed in $O(n^2m)$ time and stored in $O(nm)$ space.
\end{lemma}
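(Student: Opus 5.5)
Each node $v$ of $T$ is handled independently by running the state-set transition algorithm outward from the midpoint $c(v)$ on the TNFA $A$ (for the left side) and on the reverse automaton $\overleftarrow{A}$ (for the right side). All bounds then come from summing the cost of these simulations over the $O(\log n)$ levels of the tree.

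\textbf{Right match states.} We start from the states that reach $\phi$ by $\epsilon$-paths; this is the set $\rms(v,c(v)+1)$. For $j=c(v)+2,\ldots,r(v)$ we perform one state-set transition in $\overleftarrow{A}$ on the character $Q[j-1]$:
\[
\rms(v,j)=\delta_{\overleftarrow{A}}\bigl(\rms(v,j-1),Q[j-1]\bigr).
\]
This recurrence is correct because a path from $s$ to $\phi$ matching $Q[c(v)+1,j-1]$ is exactly the reverse of a path in $\overleftarrow{A}$ from $\phi$ to $s$ matching the reversed string. That reversed string is consumed one character at a time, starting with $Q[c(v)+1]$.

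\textbf{Left match states.} Here the extension happens on the left end of the string, since $\lms(v,i)$ concerns paths from $\theta$ matching $Q[i+1,c(v)]$. So we cannot run forward from $\theta$ character by character. Instead, for each $i$ we run a separate forward simulation from $\delta_A(\theta,\epsilon)$ over $Q[i+1,c(v)]$. This costs $O((c(v)-i)m)$ per index $i$, and hence $O(|v|^2 m)$ for node $v$, where $|v|=r(v)-l(v)+1$. The right side could be handled the same way, so both sides cost $O(|v|^2 m)$ per node.

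\textbf{Time.} A node at height $h$ has $|v|=2^h$, and there are $n/2^h$ such nodes. The total time is therefore
\[
\sum_h \frac{n}{2^h}\,4^h m \;=\; m\,n\sum_h 2^h \;=\; O(n^2 m),
\]
since the geometric sum is dominated by the root, where $2^h=n$. The $O(m)$ bound per state-set transition is the breadth-first search bound stated in the preliminaries.

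\textbf{Space.} A node $v$ stores $O(|v|)$ bit vectors of length $O(m)$, i.e.\ $O(|v| m)$ bits. Each level of the tree stores $O(nm)$ bits in total. Summing over the $\log n+1$ levels gives $O(nm\log n)$ bits, which is $O(nm)$ words when a word holds $\Theta(\log n)$ bits. If the claim is meant as $O(nm)$ machine words of states, we would need to pack each bit vector into $O(m/\log n)$ words. I would state this word-packing convention explicitly.

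The main step where care is needed is this space accounting. The time bound is comfortable even with the naive per-$i$ recomputation, but the space bound only holds under the packed-bit-vector interpretation.
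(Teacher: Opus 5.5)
Your recurrence for the right match states is false, and the justification you give for it contains the error. The reversal of $Q[c(v)+1,j-1]$ is $Q[j-1]Q[j-2]\cdots Q[c(v)+1]$, which begins with $Q[j-1]$, not $Q[c(v)+1]$. So when $j$ grows by one, the new character must be the \emph{first} one read from $\phi$ in $\overleftarrow{A}$. Your recurrence $\rms(v,j)=\delta_{\overleftarrow{A}}(\rms(v,j-1),Q[j-1])$ reads it \emph{last}. What it actually computes is the set of states from which $\phi$ is reachable by reading $Q[j-1]$ followed by $Q[c(v)+1,j-2]$. For instance, take $e=ab$ and $Q[c(v)+1,c(v)+2]=ab$ with $r(v)\ge c(v)+3$. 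Then $\rms(v,c(v)+2)=\emptyset$, so the recurrence gives $\rms(v,c(v)+3)=\emptyset$, although $\theta\in\rms(v,c(v)+3)$. The right side has exactly the obstruction you identified on the left: the simulation is anchored at $\phi$, i.e., at position $j-1$, and that anchor moves with $j$.

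Nothing downstream depends on this recurrence. You also say the right side can be handled by a fresh simulation per index, and your time bound already charges $O(|v|^2m)$ per node to both sides. Delete the recurrence and the proof goes through.

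With that fix, your argument is correct and differs from the paper only in how the simulations are organized. The paper works per index rather than per node. For each $i$, a single forward run in $A$ starting at $Q[i+1]$ passes $c(v)$ for every ancestor $v$ with $i\le c(v)$, and the current state set is recorded as $\lms(v,i)$. Symmetrically, one backward run in $\overleftarrow{A}$ over $Q^R$ from each $j$ yields $\rms(v,j)$ for all ancestors with $j>c(v)$. Organized this way, the anchor ($\theta$ at $i+1$, or $\phi$ at $j-1$) is fixed while the free end moves. So the simulation is genuinely incremental and is shared across all $O(\log n)$ ancestors, costing $O(nm)$ per index.

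Your per-(node, index) recomputation gives up this sharing. However, the geometric sum $\sum_h (n/2^h)\,4^h m$ is dominated by the root, so the total is still $O(n^2m)$; the paper's organization is simply the one in which an incremental simulation is actually valid. Your space accounting coincides with the paper's: $O(m/w)$ words per bit vector, $O(nm/w)$ per level, $\log n$ levels, and $w\ge\log n$.
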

\begin{proof}
    We first construct the binary tree together with the information $r(v),l(v)$ and $c(v)$ for each node $v$. To compute the left and right match sets for each node we first construct the TNFA $A$ and its reverse $\overleftarrow{A}$. 
    
    An ancestor $u$ of a node $v$ is a \emph{left ancestor} of $v$ if $v$ is in the right subtree of $u$. A \emph{right ancestor} is defined symmetrically.     
    For each index $i$ in $Q$, we compute $\lms(v,i)$ for all nodes $v$ that are right ancestors of $Q[i]$ using state-set transitions on $A$. To do so, we start matching $Q[i,n-1]$ in $A$ and whenever we have matched $Q[c(v)]$ for a right ancestor $v$ of $Q[i]$, we store the current state-set of $A$ as $\lms(v,i)$.
    The sets $\rms(v,j)$ can be computed similarly using  state-set transitions on $\overleftarrow{A}$ and the reverse string $Q^R$. 
    
    We first analyze the time. The binary tree with $r(v),l(v)$ and $c(v)$ for each node can be computed in $O(n)$ time. Constructing $A$ and $\overleftarrow{A}$ takes $O(m)$ time.
    Using the state-set algorithm from every index in $Q$ takes $O(nm)$ time for each index. Thus the total time to compute all left and right match sets is $O(n^2m)$. In total the time to construct $T$ is $O(n^2m)$.
     
    

    We now analyze the space.
    Each node $v$ in the tree stores $l(v),r(v)$, and $c(v)$ in constant space per node.
    Let $|A|$ be the number of states in $A$.
    A node $v$ also stores the set $\lms(v,i)$ for each $i$ where $l(v)\leq i \leq c(v)$ and the set $\rms(v,j)$ for each $j$ where $c(v)+1\leq j \leq r(v)$.
    The bit vectors for the left and right match sets each use $O(|A|)$ bits, and as $|A|=O(m)$ such a bit vector uses $O(m/w)$ space.
    Multiplying by the size of the interval, the bit vectors for a node $v$ use $O(m/w\cdot (r(v)-l(v)))$ space.
    On each level of the tree this sums to $O(nm/w)$.
    Over all levels of the tree, this sums to $O(\frac{nm}{w} \log n)=O(nm)$.
\end{proof}

\begin{figure}[t]
    \centering
    \includegraphics[width=1\linewidth]{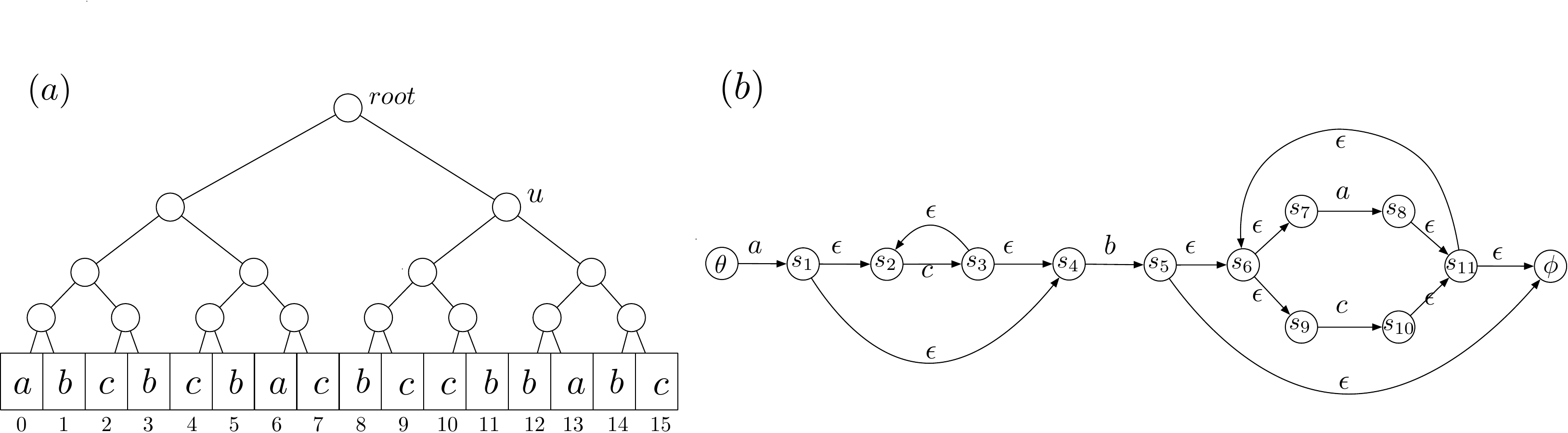}
    \caption{$(a)$ The interval tree of the string $Q = abcbcbacbccbcabc$ over the alphabet $\Sigma=\{a,b,c\}$. Here $l(u)=8$, $r(u)=15$ and $c(u)=11$. $(b)$ The TNFA for the regular expression $e_1=a (c^*)b(a|c)^*$. For the interval tree in $(a)$, $\lms(\rootnode,5)=\{s_2,s_3,s_4\}$ and $\rms(\rootnode,11)=\{s_1,s_3,s_4\}$}
    \label{fig:interval_tree}
\end{figure}

\section{Regular Expressions with a Single Reference}\label{sec:simplealgorithm}
In this section, we consider expressions of the form  $R=e_0(e)_1e_1\backslash1~e_2$ with the language $L(R) =\{w_0 q w_1 q w_2|w_i \in L(e_i),q \in L(e)\}$, and present an algorithm to determine whether such an expression $R$ of length $m$ matches a string $Q$ of length $n$ using $O(n^2m )$ time and $O(nm)$ space.

\subsection{Interval Tree Traversal}\label{sec:interval_tree_traversal}
Before describing the algorithm, we first introduce a procedure to traverse the interval tree for $e_1$. This procedure determines, for a substring $q$ of $Q$, whether $Q$ matches $e_0\odot q\odot e_1\odot q\odot e_2$.

First, we introduce some concepts relating to a substring $q$ of $Q$ and the nodes of the interval tree of $Q$ for $e_1$.
The \emph{prefix match indices} for $q$ of a node $v$, which we will denote $\prefixmatch_q(v)$ are the indices $i$ in the range  of $v$, $l(v)\leq i \leq r(v)$, such that the prefix $Q[0,i]$ of $Q$ matches $e_0\odot q$.
Likewise, we define the \emph{suffix match indices} for $q$ of a node $v$, called $\suffixmatch_q(v)$ which are the indices $j$ in the range of $v$ such that the suffix $Q[j,n-1]$ of $Q$ matches $q\odot e_2$.

We also introduce $\prefixmatch_{q,l}(v)$ and $\prefixmatch_{q,r}(v)$, which are simply the prefix match indices of $v$ partitioned such that $\prefixmatch_{q,l}(v)$ contains the indices of $\prefixmatch_q(v)$ that are in the left half of the interval, that is, $i\in \prefixmatch_q(v)$  where $i\leq c(v)$. Likewise $\prefixmatch_{q,r}(v)$ contains the indices $i\in \prefixmatch_q(v)$ where $i>c(v)$. In the same way, we can partition $\suffixmatch_q(v)$ into $\suffixmatch_{q,l}(v)$ and $\suffixmatch_{q,r}(v)$. See Figure~\ref{fig:prefix_suffixmatch} for an illustration of the different sets.

Another key element is the sets of left and right match states for $q$ of a node $v$, which we call $\lms_q(v)$ and $\rms_q(v)$ define as follows. The set $\lms_q(v)$ is the union of $\lms(v,i)$ for all $i$ in $\prefixmatch_{q,l}$. That is, a state $s$ is in $\lms_q(v)$ if and only if there is an index $i$ in $\prefixmatch_{q,l}(v)$ such that there exists a path from $\theta$ to $s$ in $A_1$ that matches $Q[i+1,c(v)]$. Similarly, $\rms_q(v)$ is the union of $\rms(v,i)$ for all $i$ in $\suffixmatch_{q,r}$.
 

Given a substring $q$, the two sets $\lms_q(v)$ and $\rms_q(v)$ can be used to check if $Q$ matches $e_0\odot q\odot e_1\odot q\odot e_2$. If $\lms_q(v)$ and $\rms_q(v)$ overlap, then there exist indices $i$ in $\prefixmatch_q(v)$ and $j$ in $\suffixmatch_q(v)$ such that there is a path in $A_1$ from $\theta$ to $\phi$ that matches $Q[i+1,j-1]$. Since $Q[0,i]$ matches $e_0\odot q$, $Q[i+1,j-1]$ matches $e_1$ and $Q[j,n-1]$ matches $q\odot e_2$, then $Q$ matches $e_0\odot q\odot e_1\odot q\odot e_2$. See Figure~\ref{fig:simple_algorithm} for an illustration.

\begin{figure}[t]
    \centering
    \includegraphics[width=1\linewidth]{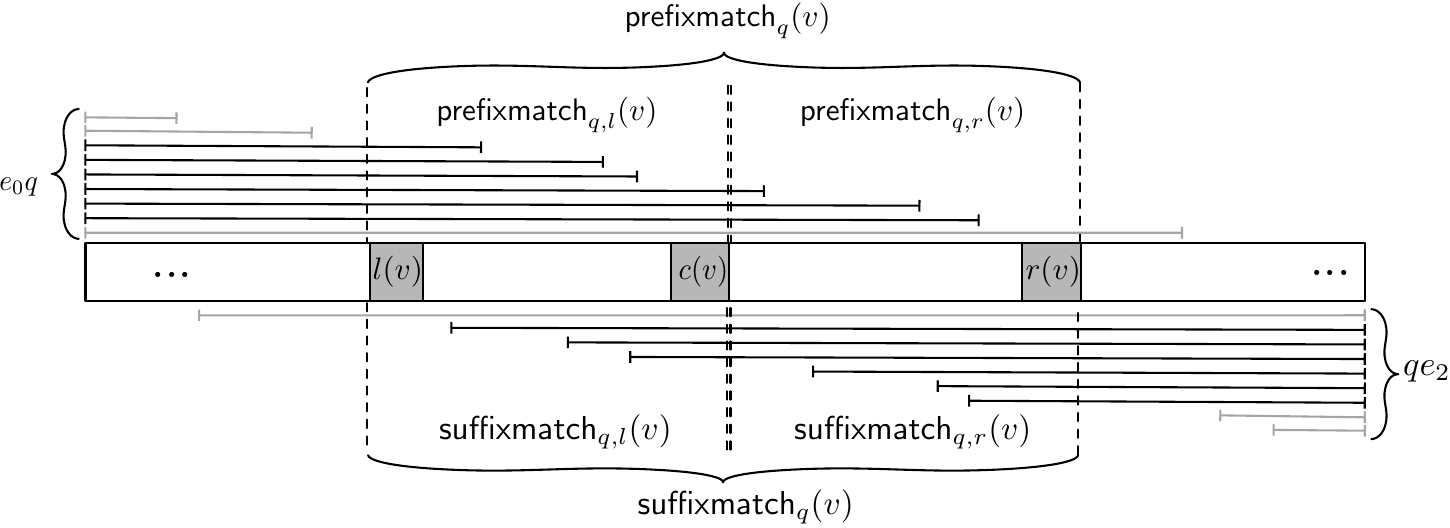}
    \caption{The indices that are in $\prefixmatch_q(v)$ and $\suffixmatch_q(v)$ respectively. The indices are the endpoints of intervals matching $e_0q$ and $qe_2$ respectively. The split into $\prefixmatch_{q,l}(v)$ and $\prefixmatch_{q,r}(v)$ as well as $\suffixmatch_{q,l}(v)$ and $\suffixmatch_{q,r}(v)$ is also shown.}
    \label{fig:prefix_suffixmatch}
\end{figure}

\subparagraph{Algorithm} With these definitions in place, we describe the traversal algorithm $\inttraversal$.
Let $T$ be the interval tree for $e_1$. Given the prefix and suffix match indices of the root for a substring $q$ of $Q$, we traverse $T$ in the following way:

\subparagraph{$\inttraversal(T,\prefixmatch_q(\rootnode), \suffixmatch_q(\rootnode))$} 
We traverse $T$ starting in the root.
In a node $v$ of $T$, we perform the following four steps:

\begin{description}
                \item[Step 1.] We partition $\prefixmatch_q(v)$ into $\prefixmatch_{q,l}(v)$ and $\prefixmatch_{q,r}(v)$, as well as $\suffixmatch_q(v)$ into $\suffixmatch_{q,l}(v)$ and $\suffixmatch_{q,r}(v)$.
                \item[Step 2.] We compute 
                \begin{alignat*}{1}
                    \lms_q(v) &= \bigcup_{i \in \prefixmatch_{q,l}(v)}\lms(v,i)\\
                    \rms_q(v) &= \bigcup_{i \in \suffixmatch_{q,r}(v)}\rms(v,i)
                \end{alignat*}
                by $\mathsf{OR}$-ing the bitvectors.
                
                \item[Step 3.] We compute the intersection of $\lms_q(v)$ and $\rms_q(v)$ by $\mathsf{AND}$-ing the bitvectors and check if it is empty. If the intersection 
                is non-empty, then $Q$ matches the expression $R$ and we stop and return true.
                \item[Step 4.] We continue the traversal of $T$. Let $u$ and $w$ be the left and right child of $v$, respectively.
                If neither $\suffixmatch_{q,l}(v)$ nor $\prefixmatch_{q,l}(v)$ is empty, recurse on $u$ with $\suffixmatch_q(u) = \suffixmatch_{q,l}(v)$ and $\prefixmatch_q(u) = \prefixmatch_{q,l}(v)$.
                Likewise if neither $\suffixmatch_{q,r}(v)$ nor $\prefixmatch_{q,r}(v)$ is empty, recurse on $w$ with $\suffixmatch_q(w) = \suffixmatch_{q,r}(v)$ and $\prefixmatch_q(w) = \prefixmatch_{q,r}(v)$. 
            \end{description}

\subsection{Algorithm}

The algorithm consists of three parts, preprocessing, finding substrings matching $e$, and processing matching substrings. 
In the preprocessing we construct the interval tree of $Q$ for $e_1$ and compute some additional information about $Q$.
We then find the substrings $q$ of $Q$ that match $e$ and for each such $q$, we construct the list of occurrences of $q$ in $Q$. We find these substrings one at a time.
Once we have found a matching substring $q$ and the list of occurrences $\occ_q$ we process $q$ by traversing the interval tree as described in Section~\ref{sec:interval_tree_traversal} to determine whether $Q$ matches $e_0\odot q \odot e_1 \odot q\odot e_2$ in $O(n^2m)$ time in total for all substrings.
We then find and process the next matching substring  and list of occurrences, and repeat until we have processed all matching substrings of $e$.


We first define the set of indices $\prefixset$, which contains the indices $i$ of $Q$ such that the prefix $Q[0,i]$ matches $e_0$.
Likewise, the set $\suffixset$ contains the the indices $j$ of $Q$ such that the suffix $Q[j,n-1]$ matches~$e_2$.

\begin{example}\label{ex:preprocessing} 
 Consider the string from Figure~\ref{fig:interval_tree} and the regular expression $e_0 = (a|b|c)^*(b|c)$ describing the set of strings that end with $b$ or $c$, as well as $e_2= (b|c)^*a(b|c)^*$ describing the set of strings that contain exactly one $a$. Then $\prefixset = \{1,2,3,4,5,7,8,9,10,11,12,14,15\}$  and $\suffixset = \{ 7,8,9,10,11,12,13\}$.
\end{example}

\subparagraph{Preprocessing}
We first construct the TNFAs $A_e$, $A_0$,  and $A_2$ for the regular expressions $e$, $e_0$, and $e_2$, respectively, and the reverse automaton $\overleftarrow{A_2}$.
We also construct the interval tree $T$ of $Q$ for $e_1$ and the suffix tree $ST$ of the string~$Q$. We then compute the sets of indices $\prefixset$ and $\suffixset$ using the TNFAs $A_0$ and $\overleftarrow{A_2}$.

We check if the empty string matches $e$ using $A_e$, and if so, we construct the TNFA for $e_0e_1e_2$, and use the state-set transition algorithm to check if $Q$ matches $R$ when the capturing group matches $\varepsilon$. If it does, we stop and return true.


\subparagraph{Finding matching substrings}

In this part, we use the suffix tree to find the substrings $q$ of $Q$ that match $e$. 
We do this in the following way:

For a node $v$ in $ST$, let $S(v)$ be the set of states in the state-set simulation after matching $str(v)$ in $A_e$.
For the root $r$ of the suffix tree, $S(r)$ is the $\varepsilon$-closure of $\theta_{A_e}$.
We now traverse the suffix tree depth-first, starting in the root.
Given $S(v)$ for a node $v$, we process each child $u$ of $v$ in the following way:
Let $Q[i,j]$ be the edge label of the edge $(v,u)$.
Starting with the state-set $S(v)$, match $Q[i,j]$ in $A_e$ using state-set transitions.
For every $k$ where $i\leq k\leq j$ we check if $\phi_{A_e}$ is in the state-set after matching $Q[i,k]$. If so, then $q=str(v)\odot Q[i,k]$ matches $e$.
In this case, we construct the list of occurrences of $q$ in $Q$ from the suffix tree, and process $q$ as described 
below. Otherwise, we continue. Once we have matched the entire label $Q[i,j]$, we save the current state-set as $S(u)$, and continue the traversal of the suffix tree.

\begin{example}\label{ex:occurrrences}
    Consider again the string from Figure~\ref{fig:interval_tree} and $e= ac|b$ Then the only two matching substrings are $ac$ and $b$, and $\occ_{ac} = [ 6]$ and $\occ_b=[1,3,5,8,11,12,14]$.
\end{example}

\subparagraph{Processing a substring}\label{sec:processsubstring}

We compute the prefix and suffix match indices of the root, $\rootnode$, of the interval tree by iterating through the indices $i$ in $\occ_q$, and letting $i+|q|\in\prefixmatch_q(\rootnode)$ if and only if $i\in \prefixset$. Likewise, $i\in \suffixmatch_q(\rootnode)$ if and only if $i+|q|\in \suffixset$. We then traverse the interval tree using the procedure $\inttraversal$ 
as described in Section~\ref{sec:interval_tree_traversal}. If there is overlap between $\lms_q(v)$ and $\rms_q(v)$ for any $v$ and $q$, then $Q$ matches $R$.

\begin{example}\label{ex:presufmatch}
Consider the string and interval tree from Figure~\ref{fig:interval_tree}, as well as the sets $\prefixset$ and $\suffixset$ from Example~\ref{ex:preprocessing} and $\occ_b$ from Example~\ref{ex:occurrrences}.
Then, $\prefixmatch_b(\rootnode) = [3,5,8,11,12]$ and since $c(\rootnode)=7$, $\prefixmatch_{b,l}(\rootnode) = [3,5]$. Likewise, $\suffixmatch_b(\rootnode) =  [8,11,12 ] = \suffixmatch_{b,r}(\rootnode)$, since all indices in $\suffixmatch_b(\rootnode)$ are larger than $c(v)$.
For the root node $\rootnode$ in the interval tree, and string $b$, $\lms_b(\rootnode) = \{s_2,s_3,s_4\}$ since $\lms(\rootnode,3)= \emptyset$ and $\lms(\rootnode,5)= \{s_2,s_3,s_4\}$. 
    Likewise, $\rms_b(\rootnode) = \{s_1,s_3,s_4,s_5,s_8,s_8,s_{10}s_{11},\phi\}$ since $\rms(r,8)= \{s_5,s_8,s_{10},s_{11},\phi\}$, $\rms(r,11)= \{s_1,s_3,s_4\}$ and $\rms(\rootnode,12)= \emptyset$.
    Then the intersection $\lms_b(\rootnode)\cap \rms_b(\rootnode)=\{ s_3,s_4\}$ is not empty. In this case $Q[6,11]$ matches $e_1$, $5\in \prefixmatch_b(\rootnode)$ and $12\in \suffixmatch_b(\rootnode)$.
\end{example}

\begin{figure}[t]
    \centering
    \includegraphics[width=1\linewidth]{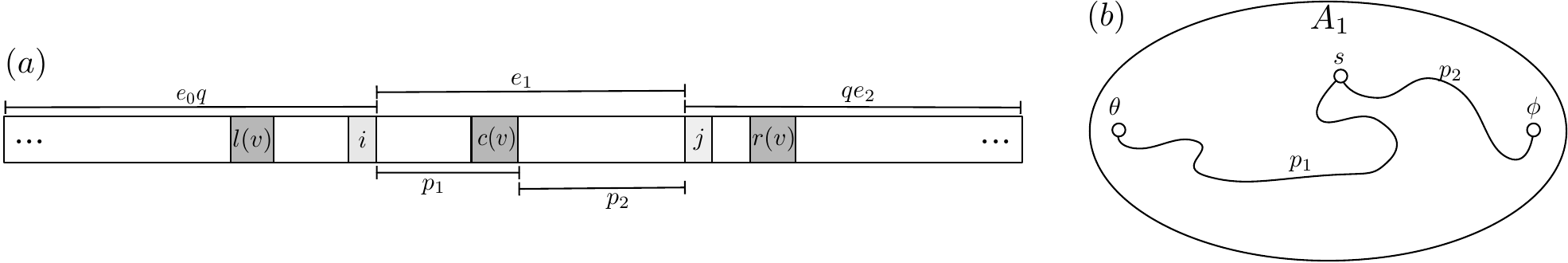}
    \caption{Given that there is overlap of $\lms_q(v)$ and $\rms_q(v)$ for a node $v$ in the interval tree, the string matches regular expressions as shown in $(a)$. $(b)$ Let $s$ be a node in this overlap. Then the paths $p_1$ and $p_2$ exist, going from $\theta$ to $s$ and from $s$ to $\phi$ respectively, and they match substrings of $Q$ as shown in $(a)$.}
    \label{fig:simple_algorithm}
\end{figure}

\subsection{Analysis}\label{sec:simpleanalysis}
We will first analyze the algorithm with respect to time.
Let $m_i$ be the length of the regular expression $e_i$ and $m_e$ be the length of $e$, such that $m=m_0+m_e+m_1+1+m_2$.

\subparagraph{Preprocessing} The TNFAs $A_e$, $A_0$, $A_2$  and $\overleftarrow{A_2}$ can be built in $O(m_e + m_0+m_2)$ time. The sets $\prefixset$ and $\suffixset$ are computed in $O(n(m_0+m_2))$ time. 
The suffix tree of $Q$ can be built in $O(n\log n)$ time.
By Lemma~\ref{lem:intervaltree} we can build the interval tree in $O(n^2m_1)$ time. 
Checking if the empty string matches $e$ takes $O(m_e)$ time, constructing the automaton for $e_0e_1e_2$ takes $O(m_0+m_1+m_2)$ time, and checking if $Q$ matches $e_0e_1e_2$ takes $O(n(m_0+m_1+m_2))$.
Thus we use $O(n^2m_1+n(m_0+m_1+m_2) +m_e)$ time in total on preprocessing.

\subparagraph{Finding matching substrings} 

When traversing and matching in the suffix tree, each label of the suffix tree is read once and the characters are matched in the TNFA for $e$, which takes $O(m_e)$ per character. 
The combined length of the labels in the suffix tree is $O(n^2)$, so reading and matching all labels takes $O(n^2m_e)$ time.

The combined length of $\occ_q$ over all substrings $q$ of $Q$ that match $e$ is $O(n^2)$ since there are $O(n^2)$ substrings of $Q$. Since we have the position of the locus of $q$ in the suffix tree the list $\occ_q$ can be computed in $O(|\occ_q|)$.  Thus the 
combined time to output $\occ_q$ for all $q$ is $O(n^2)$.
In total we use $O(n^2m_e)$ time on this part.

\subparagraph{Processing substrings} 
First, we compute the lists $\prefixmatch_q(\rootnode)$ and $\suffixmatch_q(\rootnode)$ from $\occ_q$ and $\prefixset$ and $\suffixset$. This is only done in the root. The total length of the lists $\occ_q$ for all substrings $q$ of $Q$ is $O(n^2)$ so the combined time for computing $\prefixmatch_q(\rootnode)$ and $\suffixmatch_q(\rootnode)$ for all $q$ is $O(n^2)$ time.

We now look at the computations that are done in each node. For this, we will use the following lemma:
\begin{lemma}\label{lem:suffixmatch_size}
    Let $\occ_q(a,b)$ be the list containing the indices $i$ from $\occ_q$ where $a\leq i \leq b$. For a node $v$ in the interval tree, 
    \[
    \sum_{q\in Q}|\suffixmatch_q(v)|\leq \sum_{q\in Q}|\occ_q(l(v),r(v))| \leq (r(v)-l(v)+1)\cdot n
    \]and\[
    \sum_{q\in Q}|\prefixmatch_q(v)|\leq \sum_{q \in Q}|\occ_q(l(v),r(v))| \leq (r(v)-l(v)+1)\cdot n
    \]
\end{lemma}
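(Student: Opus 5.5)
The plan is to prove each inequality in the chain separately. The sum ranges over the distinct substrings $q$ of $Q$, or more precisely over those matching $e$, which are the only ones processed; restricting to a subset only helps.

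\textbf{First inequality.} I will build an injection from $\suffixmatch_q(v)$ into $\occ_q(l(v),r(v))$ for each fixed $q$. By the definition used in the processing step, an index $j$ belongs to $\suffixmatch_q(\rootnode)$ only if $j\in\occ_q$ and $j+|q|\in\suffixset$. The traversal $\inttraversal$ only ever restricts these lists to the indices lying in the interval of the current node. Hence every $j\in\suffixmatch_q(v)$ is an occurrence start of $q$ with $l(v)\le j\le r(v)$, so $\suffixmatch_q(v)\subseteq\occ_q(l(v),r(v))$.

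For prefixes, $i\in\prefixmatch_q(v)$ means $i=p+|q|$ for some occurrence $p\in\occ_q$ with $l(v)\le i\le r(v)$. The map $i\mapsto i-|q|$ is injective, but its image may fall to the left of $l(v)$. I therefore expect the real content here to be a counting bound rather than a literal containment. The safe route is to count the prefix indices as end positions of occurrences inside the interval. So I will compare with the number of occurrences of $q$ that end in $[l(v),r(v)]$, which has the same form of bound as $|\occ_q(l(v),r(v))|$. The final bound is unaffected, since the second inequality only uses that each index in the interval is counted once per $q$. I expect this to be the only subtle point, and I would handle it by stating the middle term symmetrically, with start positions for suffix matches and end positions for prefix matches.

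\textbf{Second inequality.} I will swap the order of summation:
\[
\sum_q |\occ_q(l(v),r(v))| = \sum_{p=l(v)}^{r(v)} \bigl|\{q : p\in\occ_q\}\bigr|.
\]
For a fixed position $p$, the distinct substrings $q$ occurring at $p$ are exactly the prefixes of $Q[p,n-1]$, and each is determined by its length. There are therefore at most $n-p+1\le n+1$ of them, or at most $n$ if the empty string is excluded. The empty string is in fact handled separately in preprocessing, so $q\ne\varepsilon$ and the bound is $n$. The same argument with end positions works for prefix matches. Summing over the $r(v)-l(v)+1$ positions gives $(r(v)-l(v)+1)\cdot n$.
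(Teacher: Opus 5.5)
Your proposal is correct and is essentially the paper's argument: suffix matches are occurrence starts inside the interval, prefix matches are charged injectively to occurrences of $q$ ending inside the interval (the paper writes this as $|\occ_q(l(v)-|q|,r(v)-|q|)|$), and at most $n$ distinct nonempty substrings start, respectively end, at any fixed position. Your caution on the prefix side is justified, since the literal middle term $\sum_q|\occ_q(l(v),r(v))|$ can fail there (take the leaf at position $n-1$ with $e_0$ and $e$ matching every string, which gives $n$ on the left but $1$ in the middle). Your end-position count also justifies the final bound more cleanly than the paper, which applies the start-position count to a $q$-dependent window.
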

\begin{proof}
    We know that $|\suffixmatch_q(v)|\leq|\occ_q(l(v),r(v))|$ since every suffix $Q[i,n-1]$ where $i\in \suffixmatch_q$ starts with an occurrence of $q$.
    Likewise, $|\prefixmatch_q(v)|\leq|\occ_q(l(v)-|q|,r(v)-|q|)|$ since $\occ_q(l(v)-|q|,r(v)-|q|)$ contains the indices of occurrences of $q$ that end in the interval of $v$.
    Lastly, we note that at most $n$ different substrings can start on the same index, so $\sum_{q\in Q}|\occ_q(a,b)|\leq (b-a+1)\cdot n$, which leads to the lemma.
\end{proof}

In each node of the tree, we partition $\prefixmatch_q(v)$ and $\suffixmatch_q(v)$ according to $c(v)$ by comparing each index to $c(v)$, which takes time $O(|\prefixmatch_q(v)| + |\suffixmatch_q(v)|)$.
We then compute $\lms_q(v)$ as a union of the state-sets $\lms(v,i)$ for $i\in \prefixmatch_q(v)$. As the state-sets are represented by bit vectors of size $O(m_1/w)$, each union takes $O(m_1/w)$ time using an $\mathbf{OR}$ operation. 
Similarly, for $\rms_q(v)$.
Computing these sets thus takes $O((|\prefixmatch_q(v)| + |\suffixmatch_q(v)|) \cdot m_1/w)$ time.

The intersection between $\lms_q(v)$ and $\rms_q(v)$ is only computed in the nodes of the interval tree that are visited in the traversal for the given $q$.
A node is only visited if $\prefixmatch_q(v)$ and $\suffixmatch_q(v)$ are not empty, in which case we spend $O(m_1/w)$ to compute the intersection. Otherwise, we do not compute the intersection at node $v$.
Thus, the bound $O((|\prefixmatch_q(v)| + |\suffixmatch_q(v)|)\cdot m_1/w)$ also holds for the time spent on intersections. 

The combined time spent in node $v$ for a substring $q$ of $Q$ is $O((|\prefixmatch_q(v)| + |\suffixmatch_q(v)|)\cdot m_1/w)$, so for all of the substrings of $Q$ we use $O(\sum_{q\in Q}(|\prefixmatch_q(v)| + |\suffixmatch_q(v)|)\cdot m_1/w)$, which by Lemma~\ref{lem:suffixmatch_size} is $O((r(v)-l(v)+1)n\cdot m_1/w)$.
Since the combined size of the intervals on each level of the tree is $n$,  the time spent on each level is $O(n^2m_1/w)$.
Over all levels of the tree, we spend $O(\frac{n^2m_1}{w} \log n)=O(n^2m_1)$ time.

In total, we use $O(n^2(m_1+m_e)+n(m_0+m_2))=O(n^2m)$ time.
We now analyze the algorithm's space usage.

\subparagraph{Preprocessing} The TNFAs $A_e$, $A_0$, $A_2$, and $\overleftarrow{A_2}$ use $O(m_e + m_0+m_2)$ space, the sets $\prefixset$ and $\suffixset$ use $O(n)$ space, and the interval tree uses $O(nm_1)$ space by Lemma~\ref{lem:intervaltree}. The suffix tree uses $O(n)$ space, and the automaton for $e_0e_1e_2$ uses $O(m_0+m_1+m_2)$ space. This gives us $O(nm_1+m_e + m_0+m_2)$ space.

\subparagraph{Finding matching substrings} 

The sets $S(v)$ use $O(m_e)$ space each, and we store one such set per node in the suffix tree, so $O(nm_e)$ space in total. 
We only need $\occ_q$ for one substring at a time, and the size of each $\occ_q$ is $O(n)$.
The space complexity of this part then becomes $O(nm_e)$.

\subparagraph{Processing substrings} The suffix and prefix match indices of $q$ for the root use $O(n)$ space, and once we have partitioned the lists, we no longer need the originals.
The sets of left and right match states for $q$ of each node use $O(m_1/w)$ space, but are not needed after we finish processing the corresponding node, so we only need $O(m_1/w)$ space for this. 
Thus, processing substrings uses $O(n+m_1/w)$ space in addition to the precomputed info and data structures.

In total, the algorithm uses $O(n(m_1+m_e)+m_0+m_2)=O(nm)$ space.

\subsection{Correctness}
To prove the algorithm's correctness, we prove the following two lemmas.
\begin{lemma}\label{lem:proc1}
    The procedure $\inttraversal$ computes the intersection of the sets $\lms_q(v)$ and $\rms_q(v)$ for every node $v$ in the interval tree $T$ where the intersection is not empty.
\end{lemma}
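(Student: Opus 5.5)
We argue that every node $v$ for which the intersection $\lms_q(v)\cap\rms_q(v)$ is non-empty is visited by $\inttraversal$, and that when it is visited, the sets $\prefixmatch_q(v)$ and $\suffixmatch_q(v)$ it receives are exactly the prefix and suffix match indices lying in $[l(v),r(v)]$. Once both facts hold, Steps 1--3 at $v$ compute $\lms_q(v)$ and $\rms_q(v)$ exactly as defined, by $\mathsf{OR}$-ing the stored bit vectors $\lms(v,i)$ and $\rms(v,j)$. Step 3 then computes their intersection, which finishes the proof. (If the procedure stops earlier because some other intersection was non-empty, the claim still holds in the sense that a witness has been found; we will make this caveat explicit.)

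\emph{Correct lists.} We prove by induction on depth that whenever $v$ is visited, the lists it receives equal the global prefix and suffix match indices restricted to $[l(v),r(v)]$. At the root this is true by construction in the processing step: $\prefixmatch_q(\rootnode)$ and $\suffixmatch_q(\rootnode)$ are computed from $\occ_q$, $\prefixset$ and $\suffixset$ over the whole range. For the inductive step, let $u$ and $w$ be the left and right child of $v$. Their intervals are $[l(v),c(v)]$ and $[c(v)+1,r(v)]$. Hence Step 1's partition at $c(v)$ hands each child exactly the restriction to its own interval.

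\emph{Non-empty intersection forces a visit.} Suppose $\lms_q(v)\cap\rms_q(v)\neq\emptyset$. By definition of the unions, $\prefixmatch_{q,l}(v)$ contains some $i\le c(v)$ and $\suffixmatch_{q,r}(v)$ contains some $j>c(v)$. Now consider any proper ancestor $x$ of $v$. Both $i$ and $j$ lie in the interval of $v$, and this interval lies entirely within one half of $x$'s interval. So at $x$, the side containing $v$ has both a non-empty prefix list (containing $i$) and a non-empty suffix list (containing $j$). Step 4 therefore recurses toward $v$. Walking down the path from the root, every node on it is visited, including $v$ itself.

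The main subtlety I expect is the bookkeeping in this last step. The recursion condition in Step 4 requires \emph{both} lists to be non-empty on the same side, so we need the witnesses $i$ and $j$ to lie in the same half at every ancestor. This holds because both lie in the interval of $v$, which is nested inside a single child interval of each ancestor. The induction for the correct lists is what guarantees that $i$ and $j$ actually survive in the lists passed down to each node on the path.
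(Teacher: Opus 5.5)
Your proof is correct and takes the same approach as the paper: a non-empty $\lms_q(v)\cap\rms_q(v)$ forces $\prefixmatch_q(v)$ and $\suffixmatch_q(v)$ to be non-empty, and such nodes are visited by the traversal, where the intersection is computed. The paper only asserts that every node with both lists non-empty is visited; your induction on the passed-down lists and your root-to-$v$ path argument fill in that justification, and your caveat about early termination is appropriate.
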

\begin{proof}
    During the traversal, we compute the intersection of the sets  $\lms_q(v)$ and $\rms_q(v)$ for every $v$ in $T$ where neither $\suffixmatch_{q}(v)$ nor $\prefixmatch_{q}(v)$ s empty. If $\prefixmatch_{q}(v)$ is empty, then so is $\lms_q(v)$. Likewise, if $\suffixmatch_{q}(v)$ is empty, then so is $\rms_q(v)$, so in these cases, we know the intersection to be empty as well.
\end{proof}

\begin{lemma}\label{lem:correctness}
    Given a non-empty substring $q$ of $Q$, the sets $\lms_q(v)$ and $\rms_q(v)$ intersect for some $v$ in $T$ if and only if $Q$ matches $e_0\odot q\odot e_1\odot q \odot e_2$.
\end{lemma}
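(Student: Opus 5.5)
The plan is to prove the two directions separately. Both rest on one basic fact: for any pair $i<j$ there is a unique lowest node of $T$ whose interval contains both $i$ and $j$ and whose midpoint separates them. Throughout, $\lms_q(v)$ and $\rms_q(v)$ denote the unions defined over \emph{all} of $\prefixmatch_{q,l}(v)$ and $\suffixmatch_{q,r}(v)$. Here $\prefixmatch_q(v)$ is the set of $i\in[l(v),r(v)]$ with $Q[0,i]$ matching $e_0\odot q$, and similarly for suffixes. This is a purely definitional statement about $T$; the traversal itself is handled by Lemma~\ref{lem:proc1}.

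For the ``only if'' direction, suppose $s\in\lms_q(v)\cap\rms_q(v)$. By definition there is some $i\in\prefixmatch_{q,l}(v)$ with a path from $\theta$ to $s$ in $A_1$ matching $Q[i+1,c(v)]$. There is also some $j\in\suffixmatch_{q,r}(v)$ with a path from $s$ to $\phi$ matching $Q[c(v)+1,j-1]$. Concatenating the two paths gives a path from $\theta$ to $\phi$ matching $Q[i+1,j-1]$, so this substring is in $L(e_1)$. This is consistent with $i\le c(v)<j$. Combining it with $Q[0,i]\in L(e_0\odot q)$ and $Q[j,n-1]\in L(q\odot e_2)$ shows that $Q$ matches $e_0\odot q\odot e_1\odot q\odot e_2$. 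One boundary case needs a check: if $i=c(v)$ and $j=c(v)+1$, the match of $e_1$ is $\varepsilon$, and the match states are those reachable via $\epsilon$-paths, which is still consistent.

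For the ``if'' direction, suppose $Q=w_0qw_1qw_2$ with each $w_t\in L(e_t)$. Let $i=|w_0|+|q|-1$ and $j=i+|w_1|+1$. Then $Q[0,i]$ matches $e_0\odot q$, $Q[j,n-1]$ matches $q\odot e_2$, $Q[i+1,j-1]=w_1$, and $i<j$. Take $v$ to be the lowest node of $T$ whose interval contains both $i$ and $j$. Since $i\neq j$, $v$ is not a leaf. By minimality $i$ and $j$ lie in different children, so $l(v)\le i\le c(v)<j\le r(v)$, which gives $i\in\prefixmatch_{q,l}(v)$ and $j\in\suffixmatch_{q,r}(v)$. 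Fix an accepting path for $w_1$ in $A_1$ and split it at the point where $Q[i+1,c(v)]$ has been consumed, taking care with $\epsilon$-transitions at the split. Let $s$ be the state reached there. Then $s\in\lms(v,i)\subseteq\lms_q(v)$ and $s\in\rms(v,j)\subseteq\rms_q(v)$, so the two sets intersect.

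I expect the main obstacle to be the index bookkeeping at the boundaries rather than anything conceptual. Specifically, I must ensure that $j\le n-1$ always holds, so that $j$ lies in the tree's range. This follows because $q$ is non-empty and $Q[j,n-1]$ starts with $q$, and this is where the hypothesis $q\neq\varepsilon$ is used. I must also ensure that a suitable lowest common node exists even when $w_1=\varepsilon$, i.e.\ $j=i+1$. Finally, I need to reconcile the paper's inclusive substring convention, where $Q[a,a-1]=\varepsilon$, with the definitions of $\lms$ and $\rms$, so that the split of the path at $c(v)$ is exact.
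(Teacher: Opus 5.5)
Your proof is correct and follows the paper's argument essentially step for step. You use the same choice of $i$ and $j$ (your $j=i+|w_1|+1$ equals the paper's $n-|q|-|w_2|$), the lowest node $v$ separating them, the split of an accepting path for $w_1$ at $c(v)$, and concatenation of the two half-paths for the converse. Your explicit bookkeeping, namely that $q\neq\varepsilon$ forces $0\le i<j\le n-1$ and that the split state $s$ lies in $\rms(v,j)$ via a path matching $Q[c(v)+1,j-1]$, is slightly more careful than the paper, which writes $Q[c(v),j-1]$ and $\rms(v,i)$ at that point.
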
 
\begin{proof}
We first show that if $Q$ matches $e_0\odot q\odot e_1\odot q \odot e_2$ then the sets $\lms_q(v)$ and $\rms_q(v)$ overlap for some $v$ in $T$.

If $Q$ matches $e_0\odot q\odot e_1\odot q \odot e_2$, then there is a decomposition of $Q$ into $w_0q w_1 q w_2$ where $w_i$ matches $e_i$.
Let $i=|w_0|+|q|-1$ be the position right before $w_1$ in $Q$ and $j=n-|q|-|w_2|$ be the position just after $w_1$ in $Q$ (See Figure~\ref{fig:simple_algorithm}$(a)$) and let $v$ be the node in the interval tree corresponding to the smallest interval containing both $i$ and $j$, that is $l(v)\leq i <j\leq r(v)$.
Note that since $v$ represents the smallest such interval, this implies that $i\leq c(v)<j$.
Since $Q[0,i]=w_0q$ and $i\leq c(v)$, $i\in \prefixmatch_{q,l}(v)$. Likewise, since $Q[j,n-1]=qw_2$ and $j>c(v)$, then $j\in \suffixmatch_{q,r}(v)$. 

Since $Q[i+1,j-1]=w_1$ and $w_1$ matches $e_1$, there is a path $p$ from $\theta_{A_1}$ to $\phi_{A_1}$ in $A_1$ matching $Q[i+1,j-1]$.
As the labels on the edges of $A_1$ have length at most 1, we can split $p$ into $p_1$ and $p_2$, such that $p_1$ is a subpath of $p$ that starts in $\theta_{A_1}$, matches $Q[i+1,c(v)]$, and ends in some state $s$. Then $p_2$ is the subpath of $p$ from $s$ to $\phi_{A_1}$ which matches $Q[c(v),j-1]$.
Note that there could be several valid choices for $s$ due to  $\epsilon$-transitions.

Because of the existence of $p_1$ we have that $s$ is in $\lms(v,i)$, and together with the fact that $i\in \prefixmatch_{q,l}(v)$ this implies that $s\in \lms_q(v)$.
Likewise, because of the existence of $p_2$, then $s$ is in $\rms(v,i)$ and together with the fact that $i\in \suffixmatch_{q,r}(v)$ this implies that  $s\in \rms_q(v)$.
Thus, $s\in \lms_q(v)\cap \rms_q(v)$.

We now show that if the sets $\lms_q(v)$ and $\rms_q(v)$ overlap for some $v$ in $T$ then $Q$ matches $e_0\odot q\odot e_1\odot q \odot e_2$.

Assume that $\lms_q(v)\cap \rms_q(v) \neq \emptyset$ for some $v$ in $T$, and consider a state $s \in \lms_q(v)\cap \rms_q(v)$.

Since $s\in \lms_q(v)$, there is an index $i$ such that $s \in \lms(v,i)$, and $i \in \prefixmatch_{q,l}(v)$. Then, there exists a path $p_1$ from $\theta_{A_1}$ to $s$ matching $Q[i+1,c(v)]$. 
Likewise, there is an index $j \in \suffixmatch_{q,r}(v)$ for which there exists a path $p_2$ from $s$ to $\phi_{A_1}$ matching $Q[c(v)+1,j-1]$. 
Then $p=p_1\odot p_2$ is a path from $\theta_{A_1}$ to $\phi_{A_1}$ matching $Q[i+1,j-1]$. Thus $w_1=Q[i+1,j-1]$ matches $e_1$.

Since $i \in \prefixmatch_{q,l}(v)$ we know that $Q[0,i]=w_0q$, where $w_0$ matches $e_0$ and $q$ matches $e$.
In the same way, Since $j \in \suffixmatch_{q,r}(v)$ we know that $Q[j,n-1]=qw_2$, where $w_2$ matches $e_2$ and $q$ again matches $e$.
Put together, we have that $Q=w_0qw_1qw_2$.
\end{proof}

We traverse the interval tree for every non-empty substring $q$ of $Q$ that matches $e$, and explicitly check the case where the capturing group matches the empty string, so the correctness of the algorithm follows from Lemma~\ref{lem:proc1} and Lemma~\ref{lem:correctness}.
In summary, we have shown the following theorem.






\begin{theorem}
Given a regular expression of length $m$ with a single capturing group and a single reference, we can solve the regular expression matching with backreferences problem in $O(n^2m)$ time and $O(nm)$ space.
\end{theorem}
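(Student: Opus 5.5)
The theorem collects the results of this section, so the plan is to assemble the algorithm, the analysis of Section~\ref{sec:simpleanalysis}, and the correctness lemmas. The algorithm decides whether $Q$ matches $R=e_0(e)_1e_1\backslash1~e_2$, and it has two cases.

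\textbf{Empty capture.} If $\varepsilon\in L(e)$, we test whether $Q$ matches $e_0e_1e_2$ with a standard state-set simulation. This handles exactly the matches in which the capturing group is empty.

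\textbf{Non-empty capture.} We enumerate, via the suffix tree and $A_e$, every distinct non-empty substring $q$ of $Q$ with $q\in L(e)$, together with $\occ_q$. For each such $q$ we build $\prefixmatch_q(\rootnode)$ and $\suffixmatch_q(\rootnode)$ from $\prefixset$ and $\suffixset$ and run $\inttraversal$.

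\textbf{Correctness.} Suppose $Q$ matches $R$ with a non-empty capture $q$. Then $q$ is a substring of $Q$ lying in $L(e)$, so the enumeration produces it. This rests on the observation that the suffix-tree walk visits every distinct substring exactly once as a prefix of some edge label and checks $\phi_{A_e}$ at each position. By Lemma~\ref{lem:correctness}, some node $v$ then has $\lms_q(v)\cap\rms_q(v)\neq\emptyset$. By Lemma~\ref{lem:proc1}, the traversal computes that intersection and returns true.

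Conversely, suppose the algorithm returns true. Either it succeeded on the $e_0e_1e_2$ test with $\varepsilon\in L(e)$, or it found a non-empty intersection for some $q\in L(e)$, which by Lemma~\ref{lem:correctness} gives a decomposition $Q=w_0qw_1qw_2$. In both cases $Q$ matches $R$.

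\textbf{Time and space.} We sum the three phases of Section~\ref{sec:simpleanalysis}.
\begin{itemize}
\item Preprocessing takes $O(n^2m_1+nm)$ time, using Lemma~\ref{lem:intervaltree}.
\item Suffix-tree enumeration takes $O(n^2m_e)$ time: the edge labels have total length $O(n^2)$ and each character costs $O(m_e)$. Outputting all occurrence lists takes $O(n^2)$ in total.
\item The traversals take $O(n^2m_1)$ time. At each node $v$ and for each $q$, the work is $O((|\prefixmatch_q(v)|+|\suffixmatch_q(v)|)m_1/w)$, since unvisited nodes cost nothing. Lemma~\ref{lem:suffixmatch_size} bounds the sum over $q$ by $O((r(v)-l(v)+1)n\,m_1/w)$. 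This is $O(n^2m_1/w)$ per level and $O(n^2m_1\log n/w)=O(n^2m_1)$ over all levels.
\end{itemize}
For space, the interval tree uses $O(nm_1)$ and the stored suffix-tree state-sets use $O(nm_e)$. The per-$q$ lists and bitvectors are transient, using $O(n+m_1/w)$ at a time. The total is $O(nm)$.

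\textbf{Main obstacle.} The delicate step is the amortized traversal bound. Each visited node must be charged to an element of its match lists, and the prefix lists must be handled by shifting occurrences by $|q|$, exactly as in the proof of Lemma~\ref{lem:suffixmatch_size}. The recursion in Step~4 only descends when both lists are non-empty, so every visited node has at least one charged element and the per-node intersection cost is covered. The $\log n/w$ factor is absorbed under the standard assumption $w\ge\log n$.
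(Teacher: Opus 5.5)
Your proposal is the paper's own proof, assembled in the same order: the separate $\varepsilon$-capture test via $e_0e_1e_2$, suffix-tree enumeration of the matching $q$ together with $\occ_q$, one run of $\inttraversal$ per $q$ justified by Lemmas~\ref{lem:proc1} and~\ref{lem:correctness}, and the per-node amortization through Lemma~\ref{lem:suffixmatch_size} summed over the $\log n$ levels. The one weak point is inherited directly from the paper: partitioning and $\mathsf{OR}$-ing cost $\Omega(1)$ per index whatever $m_1$ is, so the per-node work is really $O\bigl((|\prefixmatch_q(v)|+|\suffixmatch_q(v)|)\lceil m_1/w\rceil\bigr)$, and the argument then gives only $O(n^2(m+\log n))$ time, which is $O(n^2m)$ exactly when $m=\Omega(\log n)$ (for $Q=a^n$ and $R=a^*(a^*)_1b\backslash1\,a^*$, each $q=a^\ell$ with $\ell\le n/4$ pushes $\Theta(n)$ indices through $\Theta(\log n)$ levels and no intersection is ever non-empty, giving $\Theta(n^2\log n)$ time with $m=O(1)$).
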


\section{Regular Expressions with Multiple References}\label{sec:generalization1}
In this section we consider more general patterns of the form $R=e_0(e)_1e_1\backslash1~e_2\backslash1~e_3\ldots e_{k}\backslash1~e_{k+1}$. 
We extend our techniques from Section~\ref{sec:simplealgorithm} to this case while maintaining $O(n^2m )$ time and $O(nm)$ space.

We introduce a new procedure, which traverses $k$ interval trees $T_1\ldots T_{k}$ for the regular expressions $e_1\ldots e_{k}$.
A key difference between this sequential interval tree traversal and the one in Section~\ref{sec:simplealgorithm} is that for the traversal of each $T_h$, the new procedure keeps track of the positions $i$ where $Q[0,i]$ matches $e_0\odot q \odot e_1 \odot q \odot \ldots  \odot q \odot e_{h} \odot q$.
These positions are stored in the set $\mathsf{nextprefixmatch}_{q,h}$, and used as the prefix match indices of the root of $T_{h+1}$ for the traversal of the next interval tree. See Figure~\ref{fig:multiple_backreferences}.

\subsection{Sequential Interval Tree Traversal}

We will generalize the definition of $\prefixmatch_q(v)$, relating it to the interval tree of $v$. For a node $v$ in $T_h$, $\prefixmatch_q(v)$ contains the indices $i$ such that $Q[0,i]$ matches $e_0\odot q \odot e_1 \odot q \odot \ldots  \odot q \odot e_{h-1} \odot q$. Note that for $h=1$, the prefix match indices of the root of $T_h$ are the indices $i$ such that $Q[0,i]$ matches $e_0 \odot q$, which fits with the definition from Section~\ref{sec:simplealgorithm}.
For an illustration, see Figure~\ref{fig:multiple_backreferences}.

 We use the same notation as in Lemma~\ref{lem:suffixmatch_size}, where $\occ_q(i,j)$ are the occurrences of $q$ that start between index $i$ and $j$.

Given a substring $q$ of $Q$ that matches $e$, as well as the list of occurrences $\occ_q$ and the prefix match indices of the root of $T_1$, we traverse the $k$ interval trees in $k$ rounds as follows:

\subparagraph{Sequential interval tree traversal} \label{sec:seq_interval_tree_traversal}
For $h = 1$ to $k$ do the following. 
Let $\rootnode_h$ be the root of $T_h$.
If $h>1$ we start by setting $\prefixmatch_q(\rootnode_h)= \mathsf{nextprefixmatch}_{q,h}$.
We then initialize an empty set $\mathsf{nextprefixmatch}_{q,h}$ and start the traversal of $T_h$. In node $v$ of $T_h$ we do the following four steps:

\begin{description}
    \item[Step 1.] We partition $\prefixmatch_q(v)$ into $\prefixmatch_{q,l}(v)$ and $\prefixmatch_{q,r}(v)$, as well as $\occ_q(l(v),r(v))$ into $\occ_q(l(v),c(v))$ and $\occ_q(c(v)+1,r(v))$.
    \item[Step 2.] We compute $\lms_q(v)$ as the union of the sets $\lms(v,i)$ for all indices $i$ in $\prefixmatch_{q,l}(v)$. 
    \item[Step 3.] For every $j \in \occ_q(c(v)+1,r(v))$ we compute the intersection of $\lms_q(v)$ and $\rms(v,j)$. If the intersection is non-empty, then we add $j-1+|q|$ to $\mathsf{nextprefixmatch}_{q,h}$.
    \item[Step 4.] We continue the traversal of $T_h$. Let $u$ and $w$ be the left and right child of~$v$, respectively.
    If neither $\occ_q(l(v),c(v))$ nor $\prefixmatch_{q,l}(v)$ is empty, recurse on $u$ with $\occ_q(l(u),r(u))=\occ_q(l(v),c(v))$ and $\prefixmatch_q(u) = \prefixmatch_{q,l}(v)$.
    Likewise if neither $\occ_q(c(v)+1,r(v))$ nor $\prefixmatch_{q,r}(v)$ is empty, recurse on $w$ with $\occ_q(l(w),r(w))=\occ_q(c(v)+1,r(v))$ and $\prefixmatch_q(w) = \prefixmatch_{q,r}(v)$. 
\end{description}

Note that we here use $\occ_q$ in much the same way as we used $\suffixmatch_q(\rootnode)$ in Section~\ref{sec:simplealgorithm}, since we are now interested in every position where an occurrence of $q$ starts regardless of what comes after it.

\subsection{Algorithm}

The algorithm is similar to the one from Section~\ref{sec:simplealgorithm}, with some key differences. 

The main difference is that we use the sequential interval tree traversal defined above.
Another difference from the algorithm in Section~\ref{sec:simplealgorithm} is that while finding matching substrings, we keep a set $\mathsf{endpoints}$. Every time we finish processing a substring $q$, we add the indices $i$ where $Q[0,i]$ matches $e_0\odot q \odot e_1 \odot q \odot \ldots  \odot q \odot e_{k} \odot q$ to $\mathsf{endpoints}$. These indices are exactly the ones computed in $\mathsf{nextprefixmatch}_{q,k}$.

In postprocessing, we check if there is an index $i$ in $\mathsf{endpoints}$ such that $i+1$ is in $\suffixset$. If so, $R$ matches $Q$.

\begin{figure}[t]
    \centering
    \includegraphics[width=0.8\linewidth]{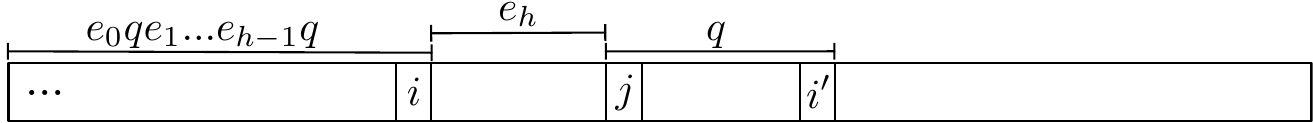}
    \caption{For an index $i'$ to be in  $\mathsf{nextprefixmatch}_{q,h}$, there must be an $i$ in $\prefixmatch_q(\rootnode_h)$ such that $Q[i+1,j-1]$ matches $e_h$ and the string $Q[j,i']$ for $j=i'-|q|+1$ is the string $q$ (such that $j$ is in $\occ_q$).}
    \label{fig:multiple_backreferences}
\end{figure}

\subparagraph{Preprocessing}

Let $\prefixset$ be the indices $i$ such that $Q[0,i]$ matches $e_0$, and $\suffixset$ be the indices $j$ such that $Q[j,n-1]$ matches $e_{k+1}$.
We build the TNFAs $A_e$, $A_0$ and $\overleftarrow{A_{k+1}}$ and use $A_0$ and $\overleftarrow{A_{k+1}}$ to compute $\prefixset$ and $\suffixset$.
Then we construct the interval trees $T_1\ldots T_{k}$.

We check if the empty string matches $e$ using $A_e$, and if so, we construct the TNFA for $e_0e_1e_2\ldots e_{k+1}$, and use the state-set transition algorithm to check if $Q$ matches $R$ when the capturing group matches $\varepsilon$. If it does, we stop and return true.

\subparagraph{Finding matching substrings}
Finding matching substrings is done just as in Section~\ref{sec:simplealgorithm}, and we process each matching substring $q$ as described below. 
Furthermore, we initialize a set $\mathsf{endpoints}$ to which we will add the indices in $\mathsf{nextprefixmatch}_{q,k}$ after we finish processing a substring $q$.

\subparagraph{Processing matching substrings}

We compute $\prefixmatch_q(\rootnode_1)$ as before by iterating through the indices $i$ in $\occ_q$, and letting $i+|q|\in\prefixmatch_q(\rootnode_h)$ if and only if $i\in \prefixset$.

We then traverse the $k$ interval trees as described in Section~\ref{sec:seq_interval_tree_traversal}.
After traversing $T_1 \ldots T_{k}$ for a substring $q$, we add the indices of $\mathsf{nextprefixmatch_{q,k}}$ to the set $\mathsf{endpoints}$. 

\subparagraph{Postprocessing}
After processing all matching substrings, we check for every index $i$ in $\mathsf{endpoints}$ whether $i+1$ is in $\suffixset$. If such an $i$ exists, $R$ matches $Q$. Otherwise, $R$ does not match $Q$.

\subsection{Analysis}
Again, let $m_i$ be the length of the regular expression $e_i$ and $m_e$ be the length of $e$, such that $m=m_0+m_e+m_1+1+m_2+1+\ldots +1+m_{k+1}$. We first analyze the algorithm's running time.

\subparagraph{Preprocessing}
As in Section~\ref{sec:simpleanalysis}, we can build $A_e$, $A_0$, $A_{k+1}$ and $\overleftarrow{A_{k+1}}$ in $O(m_e+m_0+m_{k+1})$ time and compute $\prefixset$ and $\suffixset$ in $O(n(m_0+m_{k+1}))$ time.
Building the interval tree of $Q$ for $e_h$ takes $O(n^2m_h)$ according to Lemma~\ref{lem:intervaltree}, so building the interval trees $T_1\ldots T_{k}$ takes $O(n^2\sum_{h=1}^{k}m_h)$. 
Checking if the empty string matches $e$ takes $O(m_e)$ time, and checking if $Q$ matches $e_0e_1e_2\ldots e_{k+1}$ takes $O(n(\sum_{h=0}^{k+1}m_h))$.
This gives us $O(n^2(\sum_{h=1}^{k}m_h)+n(m_0+m_{k+1}))$

\subparagraph{Finding matching substrings}
The only thing in this part that has been changed from the simpler algorithm in Section~\ref{sec:simplealgorithm} is the addition of the set $\mathsf{endpoints}$. We analyze only this, as the rest is unchanged, and thus takes $O(n^2m_e)$.

The total number of times an index is added to the set $\mathsf{endpoints}$ is the combined length of $\mathsf{nextprefixmatch}_{q,k}$ over all matching substrings $q$. Since $|\mathsf{nextprefixmatch}_{q,k}| \leq |\occ_q|$, the time spent is  $O(\sum_{q\in Q} \mathsf{nextprefixmatch}_{q,k})\leq O(\sum_{q\in Q} |\occ_q|) = O(n^2)$.

\subparagraph{Sequential interval tree traversal}
We will consider the time spent on the traversal of one interval tree $T_h$. For $h=1$, we compute $\prefixmatch_q(\rootnode_1)$ from $\occ_q$ and $\prefixset$. This takes $O(|\occ_q|)$ for each substring $q$, so $O(n^2)$ time in total for all substrings. In the rest of the interval trees, we simply use the indices of $\mathsf{nextprefixmatch}_{q,h-1}$, which has size smaller than $|\occ_q|$, so $O(n^2)$ time for all substrings.


We now consider the time spent in a node $v$ in an interval tree $T_h$. Partitioning $\prefixmatch_q(v)$ and $\occ_q(l(v),r(v))$ takes $O(|\prefixmatch_q(v)|+|\occ_q(l(v),r(v))|)$ since we compare every index to $c(v)$. The set $\lms_q(v)$ is computed in $O(|\prefixmatch_{q}(v)|\cdot m_h/w)$ time by the same argument as in Section~\ref{sec:simpleanalysis}. We then compute $O(|\occ_q(l(v),r(v))|)$ intersections of state-sets, each of which takes $O(m_h/w)$ time, so we spend $O(|\occ_q(l(v),r(v))|\cdot m_h/w)$ time on these intersections. In total, we spend $O((|\prefixmatch_{q}(v)|+|\occ_q(l(v),r(v))|)\cdot m_h/w)$ n a node for a substring $q$.
Over all substrings, we spend $O(\sum_{q\in Q}(|\prefixmatch_{q}(v)|+|\occ_q(l(v),r(v))|)\cdot m_h/w)$ which by Lemma~\ref{lem:suffixmatch_size} is $O((r(v)-l(v))nm_h/w)$. For a level of the tree, we spend $O(n^2m_h/w)$, and for traversing all levels of $T_h$ we spend $O(\frac{n^2m_h}{w}\log n)=O(n^2m_h)$. Traversing all $k$ interval trees takes $O(n^2\sum_{h=1}^{k} m_h)$. In total the algorithm uses $O(n^2(m_e+\sum_{h=1}^{k} m_h)+n(m_0+m_{k+1}))=O(n^2 m)$ time.
\medskip

We now consider the space. We use $O(n(\sum_{h=1}^{k}m_h))$ space for the interval trees and $O(m_e+\sum_{h=0}^{k+1}m_h)$ space for the automata, while still using $O(nm_e)$ for finding matching substrings of $e$. 
Prefix match indices use $O(n)$ space and are only needed for one interval tree at a time, and $occ_q$ uses $O(n)$ space and is only needed for one substring at a time. The sets of $\lms_q(v)$ use $(m_h/w)$, for $v$ in $T_h$, and are only needed for one node $v$ at a time.
Thus, in total we use $O(n(m_e+\sum_{h=1}^{k}m_h)+m_0+m_{k+1})=O(nm)$ space.

\subsection{Correctness}\label{sec:generalization1_correctness}

We first show the following lemma.

\begin{lemma}\label{lem:nextprefixmatch}
    Let $\rootnode_h$ be the root of $T_h$.
    Given that $\prefixmatch_q(\rootnode_h)$ contains $i$ if and only if $Q[0,i]$ matches a rewb $E$, then $\mathsf{nextprefixmatch}_{q,h}$ contains $i'$ if and only if $Q[0,i']$ matches $E \odot e_{h} \odot q$.
\end{lemma}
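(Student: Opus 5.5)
We prove the two directions separately, following the pattern of Lemma~\ref{lem:correctness} but with the single intersection test replaced by the per-occurrence tests of Step~3. One preliminary invariant is needed: for every node $v$ of $T_h$ visited by the traversal, the set $\prefixmatch_q(v)$ consists exactly of the indices of $\prefixmatch_q(\rootnode_h)$ lying in $[l(v),r(v)]$. Likewise, the occurrence list passed to $v$ is exactly $\occ_q(l(v),r(v))$. Both facts follow by induction on depth, since Step~1 partitions by $c(v)$ and Step~4 hands the halves to the corresponding children.

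\emph{Soundness.} Suppose $i'$ is added to $\mathsf{nextprefixmatch}_{q,h}$. Then it was added at some node $v$ from some $j\in\occ_q(c(v)+1,r(v))$ with $i'=j-1+|q|$, and $\lms_q(v)\cap\rms(v,j)\neq\emptyset$. Pick a state $s$ in the intersection. By the definition of $\lms_q(v)$, there is an $i\in\prefixmatch_{q,l}(v)$ with a path from $\theta$ to $s$ matching $Q[i+1,c(v)]$. By the definition of $\rms(v,j)$, there is a path from $s$ to $\phi$ matching $Q[c(v)+1,j-1]$. Concatenating the two paths shows that $Q[i+1,j-1]$ matches $e_h$. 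By the invariant, $i\in\prefixmatch_q(\rootnode_h)$, so $Q[0,i]$ matches $E$. Since $j\in\occ_q$, we have $Q[j,i']=q$. Hence $Q[0,i']$ matches $E\odot e_h\odot q$.

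\emph{Completeness.} Suppose $Q[0,i']$ matches $E\odot e_h\odot q$, decomposed as $Q[0,i]\in L(E)$, $Q[i+1,j-1]\in L(e_h)$ and $Q[j,i']=q$, where $j=i'-|q|+1$. Then $i\in\prefixmatch_q(\rootnode_h)$ and $j\in\occ_q$. Here $i<j$ because $q$ is non-empty and $i+1\le j$ holds even when $w_h$ is empty.

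Let $v$ be the lowest node of $T_h$ whose interval contains both $i$ and $j$, so that $i\le c(v)<j$. Every ancestor of $v$ also contains both indices, so at each such ancestor the child towards $v$ receives a non-empty prefix list (containing $i$) and a non-empty occurrence list (containing $j$). By the invariant, the traversal therefore reaches $v$, with $i\in\prefixmatch_{q,l}(v)$ and $j\in\occ_q(c(v)+1,r(v))$.

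Now split the accepting path for $Q[i+1,j-1]$ after it has read $Q[i+1,c(v)]$, exactly as in Lemma~\ref{lem:correctness}. The state reached at the split point lies in $\lms(v,i)\subseteq\lms_q(v)$ and also in $\rms(v,j)$. Step~3 therefore finds a non-empty intersection and adds $j-1+|q|=i'$.

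\emph{Main obstacle.} The delicate step is the reachability argument in the completeness direction. The pruning rule in Step~4 requires both lists to be non-empty, so we must confirm that the lowest separating node is actually visited, and that the lists it receives are restrictions of the root lists rather than something smaller. The invariant above is exactly what settles this. The boundary cases, where $i=c(v)$ or $j=c(v)+1$ so that the corresponding substring is empty, are handled by the conventions in the definitions of $\lms$ and $\rms$.
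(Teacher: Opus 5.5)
Your proof is correct and follows the paper's argument: in one direction, split the accepting path for $e_h$ at $c(v)$ in the lowest node separating $i$ and $j$; in the other, concatenate the two half-paths. Your invariant, that the traversal reaches $v$ and that the lists at $v$ are the root lists restricted to $[l(v),r(v)]$, is a welcome addition the paper leaves implicit; one small slip is that $i<j$ follows from the decomposition alone, while the non-emptiness of $q$ is what guarantees $j\le i'\le n-1$, so that $j$ is a position of $T_h$.
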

\begin{proof}
   We assume that $\prefixmatch_q(\rootnode_{h})$ contains $i$ if and only if $Q[0,i]$ matches $E$, and show that $\mathsf{nextprefixmatch}_{q,h}$ contains index $i'$ if $Q[0,i']$ matches $E \odot e_{h} \odot q$. If $Q[0,i']$ matches $E \odot e_{h} \odot q$, then there is a decomposition of $Q[0,i']$ into $Ww_{h}q$ such that the string $W$ matches $E$ and $w_h$ matches $e_h$. By the assumption and by this decomposition, we know that there is an $i$ in $\prefixmatch_q(\rootnode_{h})$ such that $Q[0,i]$ matches $E$. Let $j=i'-|q|+1$. Then $j$ is the start of an occurrence of $q$, so $j$ is in $\occ_q$. Let $v$ be the node in $T_{h}$ representing the smallest interval containing both $i$ and $j$. Then $i\leq c(v)<j$, and since $w_{h}$ matches $e_{h}$, there exists a path $p$ in $A_{h}$ matching $Q[i+1,j-1]$. We split $p$ into $p_1$ and $p_2$, such that $p_1$ is a subpath of $p$ that starts in $\theta_{A_{h}}$, matches $Q[i+1,c(v)]$, and ends in some state $s$. Then $p_2$ is the subpath of $p$ from $s$ to $\phi_{A_{h}}$ and matches $Q[c(v),j-1]$. By the existence of $p_1$ we know that $s \in \lms(v,i)$, and by the existence of $p_2$ we know that $s \in \rms(v,j)$. As $i$ is in $\prefixmatch_{q,l}(v)$, and $s$ is in $\lms(v,i)$, then $s$ is in $\lms_q(v)$. Since $j$ is in $\occ_q(l(v),r(v))$, we will compute the intersection $\lms_q(v)\cap \rms(v,j)$, and as $s$ is in $\rms(v,j)$, then $i'=j+|q|-1$ is in $\mathsf{nextprefixmatch}_{q,h}$. 
    
    We now show that $Q[0,i']$ matches $E \odot e_{h} \odot q$ if $\mathsf{nextprefixmatch}_{q,h}$ contains index $i'$. Again, let $j=i'-|q|+1$ If $i'$ is in $\mathsf{nextprefixmatch}_{q,h}$ then there is some node $v$ in $T_{h}$ for which $j$ is in $\occ_q(l(v),r(v))$ and $\lms_q(v)\cap \rms(v,j) \neq \emptyset$. Consider $s \in \lms_q(v)\cap \rms(v,j)$. Since $s$ is in $\lms_q(v)$, there is an index $i$ such that $i$ is in $\prefixmatch_{q,l}(v)$ and and $s$ is in $\lms(v,i)$. Then there is a path $p_1$ from $\theta_{A_{h}}$ to $s$ matching $Q[i+1,c(v)]$. Since $s \in \rms(v,j)$, there is a path $p_2$ from $s$ to $\phi_{A_{h}}$ matching $Q[c(v)+1,j-1]$. Then $p_1\odot p_2$ is a path from $\theta_{A_{h}}$ to $\phi_{A_{h}}$ matching $Q[i+1,j-1]$, so $Q[i+1,j-1]$ matches $e_{h}$. Since $j \in \occ_q$, then $Q[j,j+|q|-1]=Q[j,i']$ is an occurrence of $q$, so $Q[0,i']$ matches $E \odot e_{h} \odot q$.
\end{proof}

Due to the correctness of the state-set transition algorithm, $\prefixmatch_q(\rootnode_1)$ contains $i$ if and only if $Q[0,i]$ matches $e_0\odot q$. Then, since $\prefixmatch_q(\rootnode_h) = \mathsf{nextprefixmatch}_{q,h-1}$, it follows from Lemma~\ref{lem:nextprefixmatch} that $\mathsf{nextprefixmatch}_{q,k}$ contains the indices $i'$ such that $Q[0,i']$ matches $e_0\odot q \odot e_1 \odot q \odot \ldots  \odot q \odot e_{k} \odot q$.
Therefore also $\mathsf{endpoints}$ contains all indices $i'$ such that $Q[0,i']$ matches $e_0\odot q \odot e_1 \odot q \odot \ldots  \odot q \odot e_{k} \odot q$ for all substrings $q$ that match $e$. Thus $\mathsf{endpoints}$ contains all indices $i'$ such that $Q[0,i']$ matches $e_0(e)_1e_1\backslash1 ~ e_2\backslash1 ~ e_3 \ldots  e_{k} \backslash1$. Now, if there is an index $i'$ in $\mathsf{endpoints}$ such  that $i'+1$ is in $\suffixset$, then $R=e_0(e)_1e_1\backslash1~e_2\backslash1~e_3\ldots e_{k}\backslash1~e_{k+1}$ matches $Q$. 

In summary, we have shown Theorem~\ref{thm:multiple_captgroups}. 

\section{Ordered and Single-Nested Regular Expressions}\label{sec:generalization2}
In this section we extend our results to solve the matching problem for ordered, single-nested rewbs in $O(n^2m)$ time and $O(nm)$ space. These rewbs are of the form $R=E_0E_1\ldots E_te_{t+1,0}$, where $E_\ell$ is a subexpression of the form $E_\ell=e_{\ell,0}(d_\ell)_\ell ~ e_{\ell,1}~\backslash \ell ~e_{\ell,2}\backslash \ell\ldots  \backslash \ell ~ e_{\ell,k_\ell}\backslash \ell$ with $e_{\ell,j}$ and $d_\ell$ being regular expressions.

We say that a string $Q$ matches $R$ if there exists a decomposition of $Q$ into $W_0W_1\ldots W_tw_{{t+1},0}$ such that $W_\ell=w_{\ell,0}q_\ell w_{\ell,1}q_\ell\ldots q_\ell w_{\ell,k_\ell}q_\ell$ where $q_\ell$ matches $d_\ell$ and $w_{\ell,j}$ matches $e_{\ell,j}$.

\begin{figure}
    \centering
    \includegraphics[width=0.8\linewidth]{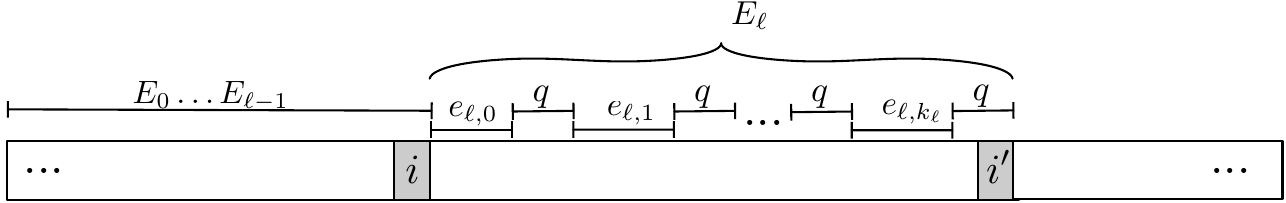}
    \caption{For an index $i'$ to be in $\mathsf{endpoints_{\ell}}$, there is an index $i$ in $\mathsf{endpoints_{\ell-1}}$ and a substring $q$ that matches $d_\ell$ such that $Q[i+1,i']$ matches 
    $e_{\ell,0}\odot q \odot e_{\ell,1}\odot \ldots \odot q \odot e_{\ell,k_\ell}\odot q$.}
    \label{fig:multiple_capturinggroups}
\end{figure}

\subsection{Algorithm}
This type of rewb is essentially a sequence of rewbs of the type we consider in Section~\ref{sec:generalization1}.
For each $E_\ell$, we find matching substrings of $d\ell$ as in Section~\ref{sec:generalization1}, and for each matching substring we traverse the $k_\ell+1$ corresponding interval trees, to find all indices $i$ such that $Q[0,i]$ matches $E_0\ldots E_\ell$. These indices are then used as prefix match indices for $E_{\ell+1}$

\subparagraph{Preprocessing}
    Let $\prefixset$ be the indices $i$ such that $Q[0,i]$ matches $e_{0,0}$, and $\suffixset$ be the indices $j$ such that $Q[j,n-1]$ matches $e_{t+1,0}$. We build the TNFAs $A_{0,0}$ and $\overleftarrow{A_{{t+1},0}}$ for $e_{0,0}$ and $e_{t+1,0}$ and use them to compute $\prefixset$ and $\suffixset$.

 \subparagraph{Processing $E_\ell$ for $\ell=0\ldots t$}
    Starting with $E_0$, we process each subexpression sequentially as follows: First, we build the interval trees $T_{\ell,0}\ldots T_{\ell,k_\ell}$ of $Q$ for $e_{\ell,0}\ldots e_{\ell,k_\ell}$ (except for $T_{0,0}$ which we will not need).  We then construct $A_{d_\ell}$. Using the state-set transition algorithm on $A_{d_\ell}$, we check whether $d_\ell$ matches the empty string $\varepsilon$. If so, we construct the TNFA for $e_{\ell,0}e_{\ell,1}e_{\ell,2}\ldots e_{\ell,k_\ell}$. For $\ell=0$ we use the state-set transition algorithm on $A_{d_\ell}$ from the beginning of $Q$, and for $\ell>0$, we use the state-set transition algorithm on $A_{d_\ell}$ from $i+1$ for every index $i$ in $\mathsf{endpoints}_{\ell-1}$. This way, we find all indices matching $E_0\ldots E_\ell$ when the capturing group matches $\varepsilon$. We add these indices to $\mathsf{endpoints}_\ell$. We then find all matching substrings of $d_\ell$ one at a time by using the same method as in Section~\ref{sec:generalization1}. Given a substring $q$ that matches $d_\ell$, we traverse the $k_\ell+1$ interval trees for $e_{\ell,0}\ldots e_{\ell,k_\ell}$ in $k_\ell+1$ rounds. Let $\rootnode_{\ell,h}$ be the root of the interval tree $T_{\ell,h}$ for $e_{\ell,h}$. In round $h$ we do as follows:
    First, we compute $\prefixmatch_q(\rootnode_{\ell,h})$. We do this according to the following 4 cases:

    \begin{description}
        \item[Case 1: $\ell=0$ and $h=0$.] We do not compute $\prefixmatch_q(\rootnode_{\ell,h})$ or traverse an interval tree in this case.
        \item[Case 2: $\ell=0$ and $h=1$.] We compute $\prefixmatch_q(\rootnode_{\ell,h})$ from $\prefixset$ and $\occ_q$ by iterating through the indices $i$ in $\occ_q$, and letting $i+|q|\in\prefixmatch_q(\rootnode_h)$ if and only if $i\in \prefixset$.
        \item[Case 3: $\ell>0$ and $h=0$.] We use $\mathsf{endpoints_{\ell-1}}$ as $\prefixmatch_q(\rootnode_{\ell,h})$ 
        \item[Case 4: Otherwise.] We use $\mathsf{nextprefixmatch}_{q,h-1}$ as $\prefixmatch_q(\rootnode_{\ell,h})$
    \end{description}
    
    We then initialize an empty set $\mathsf{nextprefixmatch}_{q,h}$ and start the traversal of $T_{\ell,h}$. In node $v$ of $T_{\ell,h}$ we do the following four steps (These are identical to Section~\ref{sec:generalization1}):

    \begin{description}
                \item[Step 1.] We partition $\prefixmatch_q(v)$ into $\prefixmatch_{q,l}(v)$ and $\prefixmatch_{q,r}(v)$, as well as $\occ_q(l(v),r(v))$ into $\occ_q(l(v),c(v))$ and $\occ_q(c(v)+1,r(v))$.
                \item[Step 2.] We compute $\lms_q(v)$ as the union of the sets $\lms(v,i)$ for all indices $i$ in $\prefixmatch_{q,l}(v)$. 
                \item[Step 3.] We compute the intersection of  $\lms_q(v)$ and $\rms(v,j)$ for every $j \in \occ_q(c(v)+1,r(v))$. If the intersection is not empty, then we add $j-1+|q|$ to $\mathsf{nextprefixmatch}_{q,h}$.
                \item[Step 4.] We continue the traversal of $T_{\ell,h}$. Let $u$ and $w$ be the left and right child of $v$, respectively.
                If neither $\occ_q(l(v),c(v))$ nor $\prefixmatch_{q,l}(v)$ is empty, recurse on $u$ with $\occ_q(l(u),r(u))=\occ_q(l(v),c(v))$ and $\prefixmatch_q(u) = \prefixmatch_{q,l}(v)$.
                Likewise if neither $\occ_q(c(v)+1,r(v))$ nor $\prefixmatch_{q,r}(v)$ is empty, recurse on $w$ with $\occ_q(l(w),r(w))=\occ_q(c(v)+1,r(v))$ and $\prefixmatch_q(w) = \prefixmatch_{q,r}(v)$. 
    \end{description}

    After traversing $T_{\ell,0} \ldots T_{\ell,k_\ell}$ for a substring $q$, we add the indices of $\mathsf{nextprefixmatch_{q,k_\ell}}$ to the set $\mathsf{endpoints_{\ell}}$, and find the next matching substring. Once we have traversed $T_{\ell,0} \ldots T_{\ell,k_\ell}$ for all substrings $q$ that match $d_\ell$, $\mathsf{endpoints_{\ell}}$ contains all indices $i$ for which $Q[0,i]$ matches $E_0\ldots E_\ell$.

\subparagraph{Postprocessing}
    After processing $E_t$, and computing $\mathsf{endpoints}_t$, we check for every index $i$ in $\mathsf{endpoints}_t$ whether $i+1$ is in $\suffixset$. If such an $i$ exists, $R$ matches $Q$. Otherwise, $R$ does not match $Q$.

\subsection{Analysis}
Let $m_{\ell,h}$ be the length of the regular expression $e_{\ell,h}$, and $m_{d_\ell}$ be the length of the regular expression $d_\ell$ such that
\[
    m=\sum_{\ell=0}^{t} (m_{\ell,0}+m_{d_\ell}+m_{\ell,1}+1+\ldots +1+m_{\ell,k_\ell})+m_{t+1,0}
    =\sum_{\ell=0}^{t} \big(m_{d_\ell}+O(k_\ell)+\sum_{h=0}^{k_\ell} m_{\ell,h}\big)+m_{t+1,0}
\]

We first analyze the time. 
Building the TNFAs $A_0$ and $\overleftarrow{A_{{t+1},0}}$ takes $O(m_{0,0}+m_{t+1,0})$ time, and using them to compute $\prefixset$ and $\suffixset$ takes $O(n(m_{0,0}+m_{t+1,0}))$.
We build interval trees $T_{\ell,h}$ for $e_{\ell,h}$ for $\ell=0\ldots t$ and $h=0\ldots k_\ell$ except for $e_{0,0}$. By Lemma~\ref{lem:intervaltree}, it takes $O(n^2m_{\ell,h})$ to build $T_{\ell,h}$, so for all interval trees, we spend $O(n^2(\sum_{\ell=0}^t(\sum_{h=0}^{k_\ell} m_{\ell,h})-m_{0,0}))$ time. 
Constructing $A_{d_\ell}$ and checking if the empty string matches $d_\ell$ takes $O(m_{d_\ell})$ time. Constructing the TNFA for  $e_{\ell,0}e_{\ell,1}e_{\ell,2}\ldots e_{\ell,k_\ell}$ takes $O(\sum_{h=0}^{k_\ell}m_{\ell,h})$ time, and using the state-set transition algorithm on it from every index in $\mathsf{endpoints}_{\ell-1}$ takes $O(n^2(\sum_{h=0}^{k_\ell}m_{\ell,h}))$. 

Finding matching substrings for each $d_\ell$ takes $O(n^2(m_{d_\ell})$ by the same argument as in Section~\ref{sec:generalization1}, so $O(n^2\sum_{\ell=0}^t m_{d_\ell})$ in total. Computing the prefix match indices is done as part of the interval tree traversals, except for the case where $\ell=0$ and $h=1$. In this case, we compute it in $O(|\occ_q|)$ time for each substring $q$ that matches $d_0$, so it takes $O(n^2)$ in total. The traversal of each interval tree is identical to that of Section~\ref{sec:generalization1}, and takes $O(n^2m_{\ell,h})$ for $T_{\ell,h}$. Recall, that we do not traverse an interval tree for $e_{0,0}$. Traversing all interval trees then takes $O(n^2(\sum_{\ell=0}^t (\sum_{h=0}^t m_{\ell,h})-m_{0,0}))$. In total, the algorithm uses
\begin{align*}
    &O(n^2\sum_{\ell=0}^t m_{d_\ell}+n^2(\sum_{\ell=0}^t(\sum_{h=0}^{k_\ell} m_{\ell,h})-m_{0,0})+n(m_{0,0}+m_{t+1,0})) \\
    =&O(n^2\big(\sum_{\ell=0}^t \big(m_{d_\ell}+\sum_{h=0}^{k_\ell} m_{\ell,h}\big)\big)+n(m_{0,0}+m_{t+1,0}))\\
    =&O(n^2m)
\end{align*}
time.

We now analyze the space. The $A_0$ and $\overleftarrow{A_{{t+1},0}}$ take $O(m_{0,0}+m_{t+1,0})$ space, and the interval trees $T_{\ell,h}$ for $e_{\ell,h}$ for $\ell=0\ldots t$ and $h=0\ldots k_\ell$ except for $e_{0,0}$ use $O(nm_{\ell,h})$ by Lemma~\ref{lem:intervaltree} for $T_{\ell,h}$ so for all interval trees, we use $O(n(\sum_{\ell=0}^t(\sum_{h=0}^{k_\ell} m_{\ell,h})-m_{0,0}))$ space. The TNFA $A_{d_\ell}$ uses $O(m_{d_\ell})$, and the TNFA for $e_{\ell,0}e_{\ell,1}e_{\ell,2}\ldots e_{\ell,k_\ell}$ uses $O(\sum_{h=0}^{k_\ell}m_{\ell,h})$ space.
Finding matching substrings for each $d_\ell$ takes $O(n(m_{d_\ell})$ by the same argument as in Section~\ref{sec:generalization1}.
The total space for the algorithm is then 
\begin{align*}
    &O(n\sum_{\ell=0}^t m_{d_\ell}+n(\sum_{\ell=0}^t(\sum_{h=0}^{k_\ell} m_{\ell,h})-m_{0,0})+m_{0,0}+m_{t+1,0}) \\
    =&O(n\big(\sum_{\ell=0}^t \big(m_{d_\ell}+\sum_{h=0}^{k_\ell} m_{\ell,h}\big)\big)+(m_{0,0}+m_{t+1,0}))\\
    =&O(nm).
\end{align*}

\subsection{Correctness}
\begin{lemma}\label{lem:endpoints}
    Given that $\mathsf{endpoints}_{\ell-1}$ contains index $i'$ if and only if $Q[0,i']$ matches $E_0\ldots E_{\ell-1}$, then $\mathsf{endpoints}_\ell$ contains an index $i'$ if and only if $Q[0,i']$ matches $E_0\ldots E_\ell$.
\end{lemma}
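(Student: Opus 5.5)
We prove Lemma~\ref{lem:endpoints} by a case split on how the capturing group $(d_\ell)_\ell$ is instantiated: either $q_\ell=\varepsilon$, or $q_\ell$ is a non-empty substring of $Q$ matching $d_\ell$. The indices placed in $\mathsf{endpoints}_\ell$ come from exactly two sources.
\begin{itemize}
\item The empty-string check adds indices directly.
\item For each non-empty $q$ matching $d_\ell$ found by the suffix-tree enumeration, the set $\mathsf{nextprefixmatch}_{q,k_\ell}$ is added.
\end{itemize}
It therefore suffices to show two things. First, the empty-string check adds precisely the indices $i'$ for which $Q[0,i']$ matches $E_0\ldots E_{\ell-1}e_{\ell,0}e_{\ell,1}\cdots e_{\ell,k_\ell}$ (for $\ell=0$, just $e_{0,0}\cdots e_{0,k_0}$). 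Second, for each non-empty $q$, the set $\mathsf{nextprefixmatch}_{q,k_\ell}$ is exactly the set of $i'$ such that $Q[0,i']$ matches $E_0\ldots E_{\ell-1}\odot e_{\ell,0}\odot q\odot e_{\ell,1}\odot q\cdots e_{\ell,k_\ell}\odot q$. Taking the union over all $q$ then yields the language of $E_0\ldots E_\ell$, because by definition a match of $E_\ell$ fixes a single $q_\ell\in L(d_\ell)$ used at every reference. The enumeration is exhaustive by the same suffix-tree argument as in Section~\ref{sec:simplealgorithm}.

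\textbf{Empty case.} Here we rely on the correctness of the state-set transition algorithm run from $i+1$ for each $i\in\mathsf{endpoints}_{\ell-1}$, combined with the hypothesis on $\mathsf{endpoints}_{\ell-1}$. An index $i'$ is recorded exactly when $Q[i+1,i']$ matches $e_{\ell,0}\cdots e_{\ell,k_\ell}$ for some such $i$. We should note that the simulation must run on the TNFA for $e_{\ell,0}\cdots e_{\ell,k_\ell}$ and not on $A_{d_\ell}$; we read the description that way. For $\ell=0$ we start at the beginning of $Q$.

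\textbf{Non-empty case.} We chain Lemma~\ref{lem:nextprefixmatch} over the rounds $h$. For $\ell>0$, Case 3 sets $\prefixmatch_q(\rootnode_{\ell,0})=\mathsf{endpoints}_{\ell-1}$. By hypothesis this set is exactly the indices matching $E:=E_0\ldots E_{\ell-1}$. Lemma~\ref{lem:nextprefixmatch}, applied to $T_{\ell,0}$, then gives that $\mathsf{nextprefixmatch}_{q,0}$ is the set of indices matching $E\odot e_{\ell,0}\odot q$. Case 4 feeds this into round $h=1$, and induction on $h$ gives the claimed characterization after round $k_\ell$.

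For $\ell=0$ there is no interval tree for $e_{0,0}$. Instead, Case 2 builds $\prefixmatch_q(\rootnode_{0,1})$ from $\prefixset$ and $\occ_q$, so that it contains exactly the indices matching $e_{0,0}\odot q$. This is the same base step used in Section~\ref{sec:generalization1_correctness}. The induction then starts at $h=1$.

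\textbf{Main obstacle.} The main difficulty is bookkeeping rather than any single step. We must check that Lemma~\ref{lem:nextprefixmatch} applies verbatim even though $E$ is now an arbitrary rewb prefix. This holds because its proof uses only the stated property of the root's prefix match indices, the occurrence list $\occ_q$, and the interval tree for the intervening expression. We must also check that every $q$ is fixed across all $k_\ell+1$ rounds, so that all references share the same captured string.
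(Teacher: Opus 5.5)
Your proof is correct and follows the same route as the paper's: you chain Lemma~\ref{lem:nextprefixmatch} across the $k_\ell+1$ rounds, starting from $\mathsf{endpoints}_{\ell-1}$ as the root prefix-match set, and then take the union over all substrings $q$ matching $d_\ell$. You are more careful than the paper's proof on two points it leaves implicit: the empty-capture case (where you rightly note the simulation must use the TNFA for $e_{\ell,0}\cdots e_{\ell,k_\ell}$ rather than $A_{d_\ell}$), and the $\ell=0$ base via Case~2.
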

\begin{proof}
    We have assumed, that $\mathsf{endpoints}_{\ell-1}$ contains index $i'$ if and only if $Q[0,i']$ matches $E_0\ldots E_{\ell-1}$. Since the traversal of each interval tree is the same as in Section~\ref{sec:generalization1}, then we can apply Lemma~\ref{lem:nextprefixmatch} repeatedly for a substring $q$ that matches $d_{\ell}$ to show, that $\mathsf{nextprefixmatch}_{q,k_\ell}$ contains the indices $i'$ which match $E_0\ldots E_{\ell-1}e_{\ell,0}\odot q\odot e_{\ell,1}\odot q\odot \ldots\odot q \odot e_{\ell,k_\ell}\odot q$. Doing this for all substrings $q$ of $Q$ that match $d_\ell$, we show that $\mathsf{endpoints}_\ell$ contains $i'$ if and only if $Q[0,i']$ matches $E_0\ldots E_\ell$.
\end{proof}

Since the algorithm for $\ell=0$ is the algorithm from Section~\ref{sec:generalization1}, the argument from Section~\ref{sec:generalization1_correctness} holds, and $\mathsf{endpoints}_{0}$ contains $i'$ if and only if $Q[0,i']$ matches $E_0$. Then, we apply Lemma~\ref{lem:endpoints}, to show that $\mathsf{endpoints}_{t}$ contains $i'$ if and only if $Q[0,i']$ matches $E_t$.
Lastly, if there exists an $i' \in \mathsf{endpoints}_{t}$ such that $i'+1$ is in $\suffixset$, then $Q$ matches $R$. In summary, we have shown Theorem~\ref{thm:orderedsinglenested}.

\bibliography{references}

\end{document}